\newcommand\triv{\mathrm{triv}}
\title{Message-passing algorithms for synchronization problems \\ over compact groups}
\author[1]{Amelia Perry%
\footnote{The first two authors contributed equally.}%
\thanks{Email: {\tt ameliaperry@mit.edu}. This work is supported in part by NSF CAREER Award CCF-1453261 and a grant from the MIT NEC Corporation.}%
}
\newcommand{\firstauthmark}{\footnotemark[1]}
\author[1]{Alexander S.\ Wein%
\protect\firstauthmark%
\thanks{Email: {\tt awein@mit.edu}. This research was conducted with Government support under and awarded by DoD, Air Force Office of Scientific Research, National Defense Science and Engineering Graduate (NDSEG) Fellowship, 32 CFR 168a.}%
}
\author[3]{Afonso S.\ Bandeira%
\thanks{Email: {\tt bandeira@cims.nyu.edu}. A.S.B.\ was supported by NSF Grant DMS-1317308.
 Part of this work was done while A.S.B.\ was with the Department of Mathematics at the Massachusetts Institute of Technology.}%
}
\author[1,2]{Ankur Moitra%
\thanks{Email: {\tt moitra@mit.edu}. This work is supported in part by NSF CAREER Award CCF-1453261, a grant from the MIT NEC Corporation and a Google Faculty Research Award.}%
}
\affil[1]{Department of Mathematics, Massachusetts Institute of Technology}
\affil[2]{Computer Science and Artificial Intelligence Lab, Massachusetts Institute of Technology}
\affil[3]{Department of Mathematics and Center for Data Science, Courant Institute of Mathematical Sciences, New York University, NY, USA}
\begin{document}
\maketitle

\begin{abstract}
Various alignment problems arising in cryo-electron microscopy, community detection, time synchronization, computer vision, and other fields fall into a common framework of synchronization problems over compact groups such as $\mathbb{Z}/L$, $U(1)$, or $SO(3)$. The goal of such problems is to estimate an unknown vector of group elements given noisy relative observations. We present an efficient iterative algorithm to solve a large class of these problems, allowing for any compact group, with measurements on multiple `frequency channels' (Fourier modes, or more generally, irreducible representations of the group). Our algorithm is a highly efficient iterative method following the blueprint of approximate message passing (AMP), which has recently arisen as a central technique for inference problems such as structured low-rank estimation and compressed sensing. We augment the standard ideas of AMP with ideas from representation theory so that the algorithm can work with distributions over compact groups. Using standard but non-rigorous methods from statistical physics we analyze the behavior of our algorithm on a Gaussian noise model, identifying phases where the problem is easy, (computationally) hard, and (statistically) impossible. In particular, such evidence predicts that our algorithm is information-theoretically optimal in many cases, and that the remaining cases show evidence of statistical-to-computational gaps.
\end{abstract}

\section{Introduction}
Among the most common data problems in the sciences is that of recovering low-rank signal present in a noisy matrix. The standard tool for such problems is principal component analysis (PCA), which estimates the signal by the top eigenvectors. One example out of many is in macroeconomics, where large, noisy correlation matrices reveal useful volatility and yield predictions in their top eigenvectors \cite{macro1,macro2}. However, many particular applications involve extra structure such as sparsity in the signal, and this structure is ignored by conventional PCA, leading to sub-optimal estimates. Thus, a major topic of recent interest in machine learning is to devise efficient algorithms for sparse PCA \cite{amini-wainwright,berthet-rigollet-optimal}, non-negative PCA \cite{mr-nonnegative}, general Bayesian PCA with a prior \cite{bishop-bayesian}, and other variants. These problems pose a major computational challenge. While significant advances have appeared, it is also expected that there are fundamental gaps between what is statistically possible and what can be done efficiently \cite{br-hardness,mmse-low-rank,phase-sparse,ma-wu}, and thus carried out in practice on very large datasets that are now prevalent.

A number of low-rank recovery problems involve a significant amount of symmetry and group structure. A compelling example is the orientation problem in cryo-electron microscopy (cryo-EM), where one is given many noisy 2D images of copies of an unknown molecule, each in different unknown 3D orientations. The goal is to estimate the orientations, in order to assemble the images into an estimate of the molecule structure \cite{singer-shkolnisky}. Thus, one is tasked with learning elements $g_u$ of $SO(3)$, one for each image $u$, based on some loss function derived from the observed images. Moreover, this loss function has a symmetry: it depends only on the relative alignments $g_u g_v^{-1}$, as there is no {\it a priori} reference frame. One previous approach to this problem, due to \cite{coifman2010reference,singer-shkolnisky}, produces a matrix of pairwise image comparisons, and then attempts to extract the rotations $g_u$ from the top eigenvectors of this matrix. However, it is reasonable to imagine that this approach could be significantly sub-optimal: PCA does not exploit the significant group structure of the signal.

Many problems with similar patterns of group symmetry have been previously studied under the general heading of \emph{synchronization problems}. In general, a synchronization problem asks us to recover a vector of group elements given noisy pairwise measurements of the relative group elements $g_u g_v^{-1}$, and PCA-based `spectral methods' are among the most common techniques used for such problems. Singer \cite{singer2011angular} introduced a PCA approach for angular synchronization, a 2D analogue of the problem above with symmetry over $SO(2)$, in which one estimates the orientations of noisy, randomly rotated copies of an unknown image. Cucuringu et al.\ \cite{cucuringu2012sensor} applied a similar approach to a sensor localization problem with synchronization structure over the Euclidean group $\mathrm{Euc}(2)$. The problem of detecting two subcommunities in a random graph can be viewed as synchronization over $\ZZ/2$ \cite{afonso-thesis}, and spectral methods have a long history of use in such community detection and minimum cut problems (e.g. \cite{mcsherry}). Further instances of synchronization appear in time synchronization in networks \cite{giridhar2006distributed}, computer vision \cite{agrawal2006range}, optics \cite{rubinstein2001reconstruction}, and alignment in signals processing \cite{bandeira2014multireference}.

Further work on synchronization has focused on ways to better exploit the group structure. One method used in practice for cryo-EM and related problems is \emph{alternating minimization}, which alternates between estimating the rotations by aligning the images with a previous guess of the molecule structure, and then estimating the molecule structure from the images using these rotations. This method only appears to succeed given a strong initial guess of the molecule structure, and then it is unclear whether the final estimate mainly reflects the observations or simply the initial guess, leading to a problem of model bias; see e.g.\ \cite{cohen-model-bias}. In this paper we are interested in {\it de novo} estimation without a substantial initial guess, steering clear of this pitfall.

Convex relaxations of maximum likelihood estimators have shown promise, such semidefinite relaxation for angular synchronization introduced by Singer \cite{singer2011angular} and proven tight by Bandeira et al.\ \cite{bandeira2014tightness}, or in semidefinite programs for community detection and correlation clustering problems that constitute $\ZZ/2$-synchronization (e.g.\ \cite{gw,abh,hwx,abbe2014decoding,ms-sdp}). Indeed, a very general semidefinite relaxation for synchronization problems was introduced by Bandeira et al.\ \cite{nug}, but its performance remains unclear even empirically: while this convex program can be solved in polynomial time, it is large enough to make experiments or application difficult.

An alternate approach is an iterative method due to Boumal \cite{boumal}, for the following Gaussian model of angular synchronization: one wants to estimate a vector $x \in \CC^n$ whose entries are unit-norm complex numbers (standing in for 2D rotations), given a matrix
$$ Y = \frac{\lambda}{n} x x^* + \frac{1}{\sqrt{n}} W, $$
where $\lambda$ is a signal-to-noise ratio (SNR) parameter, $W$ is a GUE matrix (independent complex Gaussians up to Hermitian symmetry), and $*$ denotes conjugate transpose. One could perform ordinary PCA by initializing with a small random guess $v$ and repeatedly assigning $v \leftarrow Y v$; this is the method of power iteration.
Instead, Boumal proposes\footnote{Projected power methods have appeared earlier in the literature, for instance, in \cite{mr-nonnegative}.} to iterate $v \leftarrow f(Y v)$, where $f$ divides each entry by its norm, thus projecting to the unit circle. This method is highly efficient, and is moreover observed to produce a better estimate than PCA once the signal-to-noise parameter $\lambda$ is sufficiently large. However, while PCA produces a nontrivial estimate for all $\lambda > 1$ (e.g.\ \cite{fp,nong-eigv1}), this projected power method does not appear to produce a meaningful estimate until $\lambda$ is somewhat larger. (In fact, a heuristic analysis similar to Section~\ref{sec:se} suggests that $\lambda > 2/\sqrt{\pi} \approx 1.128$ is required.) This behavior suggests that some iterative method combining the best features of PCA and the projected power method might outperform both statistically, while remaining very efficient. More importantly, we are motivated to find analogous iterative methods for groups other than $U(1)$, and for more complicated observation models.

One such observation model is as follows. Instead of observing only the matrix $Y$ as above, suppose we are given matrices corresponding to different Fourier modes:
$$ Y_1 = \frac{\lambda_1}{n} x x^* + \frac{1}{\sqrt{n}} W_1, $$
$$ Y_2 = \frac{\lambda_2}{n} x^2 (x^2)^* + \frac{1}{\sqrt{n}} W_2, $$
$$ \vdots $$
$$ Y_K = \frac{\lambda_K}{n} x^K (x^K)^* + \frac{1}{\sqrt{n}} W_K, $$
where $x^k$ denotes entrywise power, and $W_k$ are independent GUE matrices. With a PCA-based approach, it is not clear how to effectively couple the information from these matrices to give a substantially better estimate than could be derived from only one. The ``non-unique games'' semidefinite program of \cite{nug} is able to use data from an observation model such as this, but it is not yet empirically clear how it performs. Can we hope for some very efficient iterative algorithm to strongly leverage data from multiple `frequencies' or `channels' such as this?

This question has more than abstract relevance: due to Fourier theory, a very large class of measurement models for $U(1)$-synchronization decomposes into matrix-based observations on different frequencies, in a manner resembling the model above. Moreover, an analogous statement holds over $SO(3)$, and over other compact groups, with Fourier theory replaced by the noncommutative setting of representation theory. Thus, the aforementioned spectral approach to cryo-EM applies PCA to only the lowest frequency part of the observations; an algorithm that can use all frequencies effectively might demonstrate dramatically improved statistical performance on cryo-EM.

In this paper we present an iterative algorithm to meet the challenges above. Our algorithm aims to solve a general formulation of the synchronization problem: it can apply to multiple-frequency problems for a large class of observation models, with symmetry over any compact group. Our approach is statistically powerful, empirically providing a better estimate than both PCA and the projected power method on $U(1)$-synchronization, and leveraging multiple frequencies to give several orders of magnitude improvement in estimation error in experiments (see Figures~\ref{fig:u1-many} and~\ref{fig:u1-many-log}). Indeed, we conjecture based on ideas from statistical physics that in many regimes our algorithm is statistically optimal, providing a minimum mean square error (MMSE) estimator asymptotically as the matrix dimensions become infinite (see Section~\ref{sec:gaps}). Finally, our approach is highly efficient, with each iteration taking time linear in the (matrix) input, and with roughly 15 iterations sufficing for convergence in experiments.

Our algorithm follows the framework of \emph{approximate message passing} (AMP), based on belief propagation on graphical models \cite{pearl} and the related cavity method in statistical physics \cite{mezard-parisi-virasoro}. Following a general blueprint, AMP algorithms have previously been derived and analyzed for compressed sensing \cite{amp-cs,amp-mot,bm,jm}, sparse PCA \cite{dm-sparse-pca}, non-negative PCA \cite{mr-nonnegative}, cone-constrained PCA \cite{dmr-cone}, planted clique \cite{dm-clique} and general structured PCA \cite{RF-amp}. In fact, AMP has already been derived for $\ZZ/2$-synchronization under a Gaussian observation model \cite{dam}, and our algorithm will generalize this one to all compact groups. A striking feature of AMP is that its asymptotic performance can be captured exactly by a particular fixed-point equation called \emph{state evolution}, which has enabled the rigorous understanding of its performance on some problems \cite{bm,jm}. AMP is provably statistically optimal in many cases, including Gaussian $\ZZ/2$ synchronization (modulo a technicality whereby the proof supposes a small warm-start) \cite{dam}.

AMP algorithms frequently take a form similar to the projected power method of Boumal described above, alternating between a matrix--vector product with the observations and an entrywise nonlinear transformation, together with an extra `Onsager' correction term. In the case of $\ZZ/2$- or $U(1)$-synchronization, we will see that the AMP derivation reproduces Boumal's algorithm, except with the projection onto the unit circle replaced by a soft, sigmoid-shaped projection function to the unit disk (see Figure~\ref{fig:tanh}), with the magnitude maintaining a quantitative measure of confidence. Integrating the usual AMP blueprint with the representation theory of compact groups, we obtain a broad generalization of this method, to synchronization problems with multiple frequencies and noncommutative groups such as $SO(3)$. In full generality, the nonlinear transformation has a simple interpretation through representation theory and the exponential function.

One drawback of our approach is that although we allow for a very general observation model, we do insist that the noise on each pairwise measurement is independent. This fails to capture certain synchronization models such as multireference alignment \cite{bandeira2014multireference} and cryo-EM that have noise on each group element rather than on each pair. For instance, the noise in cryo-EM occurs on each image rather than independently on each pairwise comparison. Adapting AMP to these more general models is left for future work.

This paper is organized as follows. We begin in Section~\ref{sec:intuition} with an outline of our methods in the simplified cases of synchronization over $\ZZ/2$ and $U(1)$, motivating our approach from a detailed discussion of prior work and its shortfalls. In Section~\ref{sec:alg} we provide our general algorithm and the general problem model for which it is designed. Several experiments on this Gaussian model and other models are presented in Section~\ref{sec:experiments}, demonstrating strong empirical performance. We then offer two separate derivations of our AMP algorithm: in Section~\ref{sec:bp-deriv}, we derive our algorithm as a simplification of belief propagation, and then in Section~\ref{sec:mmse-deriv} we give an alternative self-contained derivation of the nonlinear update step and use this to provide a non-rigorous analysis of AMP (based on standard assumptions from statistical physics). In particular, we derive the state evolution equations that govern the behavior of AMP, and use these to identify the threshold above which AMP achieves non-trivial reconstruction. Namely, we see that AMP has the same threshold as PCA (requiring the SNR $\lambda$ to exceed $1$ on at least one frequency), but AMP achieves better recovery error above the threshold. In Section~\ref{sec:se-correct} we argue for the correctness of the above non-rigorous analysis, providing both numerical and mathematical evidence. It is known that inefficient estimators can beat the $\lambda = 1$ threshold \cite{pwbm-contiguity} but we conjecture that no efficient algorithm is able to break this barrier, thus concluding in Section~\ref{sec:gaps} with an exploration of statistical-to-computational gaps that we expect to exist in synchronization problems, driven by ideas from statistical physics.


\section{\texorpdfstring{Intuition: iterative methods for $\ZZ/2$ and $U(1)$ synchronization}{Intuition: iterative methods for Z/2 and U(1) synchronization}}\label{sec:intuition}

We begin with a discussion of synchronization methods over the cyclic group $\ZZ/2$ and the group of unit complex numbers (or 2D rotations) $U(1)$. These examples will suffice to establish intuition and describe much of the novelty of our approach, while avoiding the conceptual and notational complication of representation theory present in the general case. Sections~\ref{sec:z2}, \ref{sec:bp-amp}, and some of~\ref{sec:amp-u1-one} discuss prior work on these problems in more depth, while Sections~\ref{sec:amp-u1-one} and~\ref{sec:amp-u1-mult} develop a special case of our algorithm.

\subsection{\texorpdfstring{$\ZZ/2$ synchronization}{Z/2 synchronization}}\label{sec:z2}
The problem of \emph{Gaussian $\ZZ/2$ synchronization} is to estimate a uniformly drawn signal $x \in \{\pm 1\}^n$ given the matrix
$$ Y = \frac{\lambda}{n} x x^\top + \frac{1}{\sqrt n} W, $$
where $W$ is a symmetric matrix whose entries are distributed independently (up to symmetry) as $\cN(0,1)$, and $\lambda > 0$ is a signal-to-noise parameter. With this scaling, the signal and noise are of comparable size in spectral norm; we can not hope to recover $x$ exactly, but we can hope to produce an estimate $\hat x \in \{\pm 1\}$ that is correlated nontrivially with $x$, i.e. there exists $\eps > 0$ (not depending on $n$) such that $\frac{1}{n^2} \langle x,\hat x \rangle^2 > \eps$ with probability $1-o(1)$ as $n \to \infty$. As $x x^\top = (-x) (-x)^\top$, we can only hope to estimate $x$ up to sign; thus we aim to achieve a large value of $\langle x, \hat x \rangle^2$. We now review three algorithmic methods for this problem.

\paragraph{Spectral methods.} With the scaling above, the spectral norm of the signal $\frac{\lambda}{n} x x^\top$ is $\lambda$, while that of the noise $\frac{1}{\sqrt{n}} W$ is $2$. By taking the top eigenvector of $Y$, $x$ may be estimated with significant correlation provided that $\lambda$ is a large enough constant.

Specifically, the generative model for $Y$ above is a special case of the \emph{spiked Wigner model}, and the eigenvalues and eigenvectors of such spiked models are among the main objects of study in random matrix theory. When $\lambda > 1$, the (unit norm) top eigenvector $v_{\max}(Y)$ correlates nontrivially with $x$; more specifically, as $n \to \infty$, we have $\frac{1}{n} \langle x, v_{\max}(Y) \rangle^2 \to 1 - 1/\lambda^2$ in probability \cite{fp,nong-eigv1}. When $\lambda < 1$, this squared correlation tends to zero; in fact, this is known to be true of all estimators \cite{dam,pwbm-contiguity}, reflecting a sharp statistical phase transition.

Note that a top eigenvector may be computed through \emph{power iteration} as follows: an initial guess $v^0$ is drawn randomly, and then we iteratively compute $v^{(t)} = Y v^{(t-1)}$, rescaling the result as appropriate. Thus each entry is computed as $v^{(t)}_u = \sum_w Y_{uw} v^{(t-1)}_w$; we can imagine that each entry $w$ sends a `message' $Y_{uw} v^{(t-1)}_w$ to each entry $u$ -- the `vote' of entry $w$ as to the identity of entry $u$ -- and then each entry sums the incoming votes to determine its new value. The result has both a sign, reflecting the weighted majority opinion as to whether that entry should ultimately be $+1$ or $-1$, and also a magnitude, reflecting a confidence and serving as the weight in the next iteration. Thus we can envision the spectral method as a basic ``message-passing algorithm''.

While this approach is effective as quantified above, it would seem to suffer from two drawbacks:
\begin{itemize}
\item the spectral method is basis-independent, and thus cannot exploit the entrywise $\pm 1$ structure of the signal;
\item the vertex weights can grow without bound, potentially causing a few vertices to exert undue influence.
\end{itemize}

Indeed, these drawbacks cause major issues in the \emph{stochastic block model}, a variant of the model above with the Gaussian observations replaced by low-probability Bernoulli observations, usually envisioned as the adjacency matrix of a random graph. Here a few sporadically high-degree vertices can dominate the spectral method, causing asymptotically significant losses in the statistical power of this approach.

\paragraph{Projected power iteration.}
Our next stepping-stone toward AMP is the projected power method studied by \cite{boumal,chen-candes}, a variant of power iteration that exploits entry-wise structure. Here each iteration takes the form $v^{(t)} = \sgn(Y v^{(t-1)})$, where the sign function $\sgn : \mathbb{R} \to \{\pm 1\}$ applies entrywise. Thus each iteration is a majority vote that is weighted only by the magnitudes of the entries of $Y$; the weights do not become more unbalanced with further iterations. This algorithm is also basis-dependent in a way that plausibly exploits the $\pm 1$ structure of the entries.

Empirically, this algorithm obtains better correlation with the truth, on average, when $\lambda > 2.4$ approximately; see Figure~\ref{fig:z2-compare}. However, for very noisy models with $1 < \lambda < 2.4$, this method appears weaker than the spectral method. The natural explanation for this weakness is that this projected power method forgets the distinction between a 51\% vote and a 99\% vote, and thus is over-influenced by weak entries. This is particularly problematic for low signal-to-noise ratios $\lambda$, for which 51\% votes are common. In fact, a heuristic analysis similar to Section~\ref{sec:se} suggests that this method does not achieve the correct threshold for $\lambda$, failing to produce nontrivial correlation whenever $\lambda < 2/\sqrt{\pi} \approx 1.128$.

\paragraph{Soft-threshold power iteration.}
A natural next step is to consider iterative algorithms of the form $v^{(t)} = f(Y v^{(t-1)})$, where $f$ applies some function $\RR \to [-1,1]$ entrywise (by abuse of notation, we will also denote the entrywise function by $f$). Instead of the identity function, as in the spectral method, or the sign function, as in the projected power method, we might imagine that some continuous, sigmoid-shaped function performs best, retaining some sense of the confidence of the vote without allowing the resulting weights to grow without bound. It is natural to ask what the optimal function for this purpose is, and whether the resulting weights have any precise meaning.

Given the restriction to the interval $[-1,1]$, one can imagine treating each entry as a sign with confidence in a more precise way, as the expectation of a distribution over $\{\pm 1\}$. At each iteration, each entry $u$ might then obtain the messages $Y_{uw} v^{(t-1)}_w$ from all others, and compute the posterior distribution, summarized as an expectation $v^{(t)}_u$. As one can compute, this corresponds to the choice of transformation $f(t) = \tanh(\lambda t)$ where $\lambda$ is the signal-to-noise parameter from above (see Figure~\ref{fig:tanh}).

\begin{figure}[!ht]\centering
\begin{minipage}{0.8\textwidth}\centering
\includegraphics[width=0.6\linewidth]{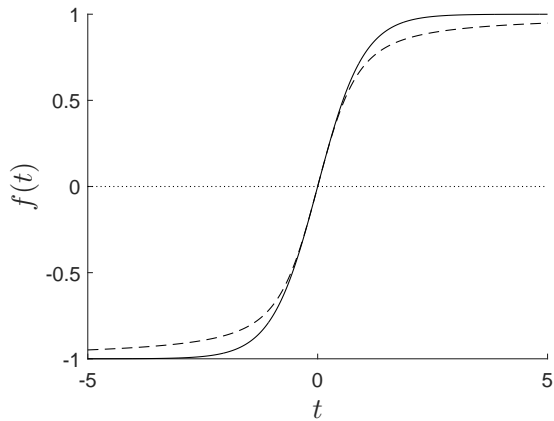}
\caption{Soft threshold functions used by AMP. The solid line is $f(t) = \tanh(t)$, used for $\mathbb{Z}/2$ synchronization. The dashed line is $f(t) = I_1(2t)/I_0(2t)$ (modified Bessel functions of the first kind), used for $U(1)$ synchronization with one frequency.}
\label{fig:tanh}
\end{minipage}
\end{figure}

\subsection{Belief propagation and approximate message passing}\label{sec:bp-amp}

The soft-projection algorithm above may remind the reader of belief propagation, due to \cite{pearl} in the context of inference, and to \cite{mezard-parisi-virasoro} as the \emph{cavity method} in the context of statistical physics. We may envision the problem of estimating $x$ as probabilistic inference over a graphical model. The vertices of the model represent the unknown entries of $x$, and every pair of vertices $u,w$ participates in an edge interaction based on the matrix entry $Y_{uw} = Y_{wu}$. Specifically, it may be computed that the posterior distribution for $x \in \{\pm 1\}^n$ after observing $Y$ is given by
$$ \Pr(x) \propto \prod_{u < w} \exp(\lambda Y_{uw} x_u x_w), $$
which is precisely the factorization property that a graphical model captures.

Given such a model, belief propagation proceeds in a fashion reminiscent of the previous algorithm: each vertex $w$ sends a message to each neighbor $u$ encoding the posterior distribution of $x_u$ based the previous distribution of $x_w$ and the direct interaction $\lambda Y_{uw}$. Each vertex $u$ then consolidates all incoming messages into a new `posterior' distribution on $x_u$ given these messages, computed as if the messages were independent. However, belief propagation introduces a correction to this approach: rather than letting information passed from $w$ to $u$ propagate back to $w$ on the next iteration, belief propagation is designed to pass information along only those paths that do not immediately backtrack. Specifically, at each iteration, the message from $w$ to $u$ is based on only the synthesis of messages from all vertices except $u$ from the previous iteration.

This algorithm differs from the iterative methods presented above, both in this non-backtracking behavior, and in the fact that the transformation from the distribution at $w$ to the message $w \to u$ is not necessarily linear (as in the multiplication $Y_{uw} v_w$ above). Both of these differences can be reduced by passing to the framework of \emph{approximate message passing} \cite{amp-cs}, which simplifies belief propagation in dense models with weak interactions, through the following two observations (inspired by \cite{tap} in the theory of spin glasses):
\begin{itemize}
\item As the interaction $\lambda Y_{uw}$ is small, scaling as $O(1/\sqrt{n})$ as $n \to \infty$, we may pass to an expansion in small $Y_{uw}$ when computing the message $w \to u$ from the mean $w$. In this example, we find that the message $w \to u$ should be Rademacher with mean $\lambda Y_{uw} v_w^{(t-1)} + O(Y_{uw}^2)$, where $v_w^{(t-1)}$ is the mean of the distribution for $x_w$ in the previous iteration. This linear expansion ensures that the main message-passing step can be expressed as a matrix--vector product.
\item Rather than explicitly computing non-backtracking messages, which is computationally more involved, we may propagate the more na\"ive backtracking messages and then subtract the bias due to this simplification, which concentrates well. This correction term is called an \emph{Onsager correction}. If vertex $w$ passes messages to all neighbors based on its belief at iteration $t-2$, and then all of these neighbors send return messages based on their new beliefs at time $t-1$, then when updating the belief for vertex $w$ at time $t$, one can explicitly subtract off the `reflected' influence of the previous belief at time $t-2$. It turns out that this is the only correction necessary: all other error contributions (e.g. 3-cycles) are $o(1)$.
\end{itemize}

Following these simplifications, one can arrive at an \emph{approximate message passing} (AMP) algorithm for $\ZZ/2$-synchronization:
\begin{algorithm}[AMP for $\ZZ/2$ synchronization \cite{dam}]---
\begin{itemize}
\item Initialize $v^0$ to small (close to zero) random values in $[-1,1]$.
\item Iterate for $1 \leq t \leq T$:
\begin{itemize}
\item Set $c^{(t)} = \lambda Y v^{(t-1)} - \lambda^2 (1-\langle (v^{(t-1)})^2 \rangle) v^{(t-2)}$, the Onsager-corrected sum of incoming messages.
\item Set $v^{(t)}_u = \tanh(c^{(t)}_u)$ for each vertex $u$, the new estimated posterior mean.
\end{itemize}
\item Return $\hat x = v^{(T)}$ (or the approximate MAP estimate $\hat x = \sgn(v^{(T)})$ if a proper estimate in $\{\pm 1\}^n$ is desired).
\end{itemize}
\end{algorithm}
\noindent Here $\langle (v^{(t-1)})^2 \rangle$ denotes the average of the squared entries of $v^{(t-1)}$. Detailed derivations of this algorithm appear in Sections~\ref{sec:bp-deriv} and~\ref{sec:mmse-deriv} in much higher generality.

In the setting of $\ZZ/2$ synchronization, an algorithm equivalent to the above approach appears in \cite{dam}, where a statistical optimality property is proven: if AMP is warm-started with a state $v^0$ with nontrivial correlation with the truth, then it converges to an estimate of $x$ that achieves minimum mean-squared error (MMSE) asymptotically as $n \to \infty$. The warm-start requirement is technical and likely removable: if AMP is initialized to small randomness, with trivial correlation $O(1/\sqrt{n})$ with the truth, then its early iterations resemble PCA and should produce nontrivial correlation in $O(\log n)$ iterations. The statistical strength of AMP is confirmed empirically, as it appears to produce a better estimate than either PCA or the projected power method, for every $\lambda > 1$; see Figure~\ref{fig:z2-compare}.

\subsection{\texorpdfstring{AMP for Gaussian $U(1)$ synchronization with one frequency}{AMP for Gaussian U(1) synchronization with one frequency}}\label{sec:amp-u1-one}
As a first step toward higher generality, consider the following Gaussian synchronization model over the unit complex numbers $U(1)$. The goal is to estimate a uniformly drawn signal $x \in U(1)$ given the matrix
$$ Y = \frac{\lambda}{n} x x^* + \frac{1}{\sqrt n} W, $$
where $W$ is a Hermitian matrix whose entries are distributed independently as $\CC\cN(0,1)$, the complex normal distribution given by $\cN(0,1/2) + \cN(0,1/2)i$, and where $\lambda > 0$ is a signal-to-noise parameter. As $x x^*$ is invariant under a global phase shift of $x$, we can only hope to estimate up to the same ambiguity, and so we would like an estimator $\hat x$ that maximizes $|\langle x, \hat x\rangle|^2$, where the inner product is conjugated in the second variable. Many of the previously discussed iterative techniques adapt to this new case.

\paragraph{Spectral methods.} The same analysis of the spectral method holds in this case; thus when $\lambda > 1$, the top eigenvector achieves nontrivial correlation with $x$, while for $\lambda < 1$, the spectral method fails and nontrivial estimation is provably impossible \cite{pwbm-contiguity}.

\paragraph{Projected power method.} After each matrix--vector product, we can project $v^{(t)}$ entrywise onto the unit circle, preserving the phase of each entry while setting the magnitude to $1$. This algorithm is analyzed in \cite{boumal} in a lower-noise setting, where it is shown to converge to the maximum likelihood estimator.

\paragraph{Soft-threshold power method.} One might imagine applying some entrywise function after each matrix--vector product, which preserves the phase of each entry while mapping the magnitude to $[0,1]$. Thus the vector entries $v_u$ live in the unit disk, the convex hull of the unit circle; these might be envisioned as estimates of the posterior expectation of $x_u$. 

\paragraph{Belief propagation \& AMP.} Belief propagation is somewhat problematic in this setting: all messages should express a distribution over $U(1)$, and it is not \emph{a priori} clear how this should be expressed in finite space. However, under the simplifications of approximate message passing, the linearity of the message-passing stage enables a small summary of this distribution to suffice: we need only store the expectation of each distribution, a single value in the unit disk. Approximate message passing takes the following form:
\begin{algorithm}[AMP for $U(1)$ synchronization with one frequency]---
\begin{itemize}
\item Initialize $v^0$ to small random values in the unit disk $\conv(U(1))$.
\item Iterate for $1 \leq t \leq T$:
\begin{itemize}
\item Set $c^{(t)} = \lambda Y v^{(t-1)} - \lambda^2 (1-\langle |v^{(t-1)}|^2 \rangle) v^{(t-2)}$, the Onsager-corrected sum of incoming messages.
\item Set $v^{(t)}_u = f(c^{(t)}_u)$ for each vertex $u$, the new estimated posterior mean. Here $f$ applies the function $f(t) = I_1(2t) / I_0(2t)$ to the magnitude, leaving the phase unchanged.
\end{itemize}
\item Return $\hat x = v^{(T)}$ (or the approximate MAP estimate $\hat x = \phase(v^{(T)})$ if a proper estimate in $U(1)^n$ is desired).
\end{itemize}
\end{algorithm}
\noindent Here $I_k$ denotes the modified Bessel functions of the first kind. The function $f$ is depicted in Figure~\ref{fig:tanh}. Detailed derivations of this algorithm appear in Sections~\ref{sec:bp-deriv} and~\ref{sec:mmse-deriv} in much higher generality.

\subsection{\texorpdfstring{AMP for Gaussian $U(1)$ synchronization with multiple frequencies}{AMP for Gaussian U(1) synchronization with multiple frequencies}}\label{sec:amp-u1-mult}
Consider now the following more elaborate synchronization problem. The goal is to estimate a spike $x \in U(1)^n$ from the observations
$$ Y_1 = \frac{\lambda_1}{n} x x^* + \frac{1}{\sqrt{n}} W_1, $$
$$ Y_2 = \frac{\lambda_2}{n} x^2 (x^2)^* + \frac{1}{\sqrt{n}} W_2, $$
$$ \vdots $$
$$ Y_K = \frac{\lambda_K}{n} x^K (x^K)^* + \frac{1}{\sqrt{n}} W_K, $$
where the $W_k$ are independent Hermitian matrices whose entries are distributed independently (up to symmetry) as $\CC\cN(0,1)$, the $\lambda_k > 0$ are signal-to-noise parameters, and $x^k$ denotes the entrywise $k$th power of $x$.

Thus we are given $K$ independent noisy matrix-valued observations of $x$; we can imagine these observations as targeting different \emph{frequencies} or Fourier modes. Given two independent draws of $\lambda \, x x^* / n + W/\sqrt{n}$ as in the previous section, the spectral method applied to their average will produce a nontrivial estimate of $x$ as soon as $\lambda > 1/\sqrt{2}$. However, under the multiple frequencies model above, with $K=2$ and $\lambda_1 = \lambda_2 = \lambda$, nontrivial estimation is provably impossible for $\lambda < 0.937$ \cite{pwbm-contiguity}; we present non-rigorous evidence in Section~\ref{sec:gaps} that the true statistical threshold should in fact remain $\lambda = 1$. Thus the multiple frequencies model would seem to confound attempts to exploit the multiple observations together. However, we will discuss how AMP enables us to obtain a much better estimate when $\lambda > 1$ than is possible with one matrix alone.

Let us return to the issue of belief propagation over $U(1)$, and of how to represent distributions. One crude approach might be to discretize $U(1)$ and express the density on a finite subset of points; however, this is somewhat messy (e.g.\ the discretization is not preserved under rotation) and only becomes worse for more elaborate groups such as $SO(3)$ (here one can not even find arbitrarily fine discretizations on which the group acts transitively).

Instead, we could exploit the rich structure of Fourier theory, and express a distribution on $U(1)$ by the Fourier series of its density\footnote{A dense subset of distributions satisfies appropriate continuity assumptions to discuss their densities with respect to uniform measure, a Fourier series, etc., and we will not address these analytic technicalities further.}. Thus, if $\mu_w^{(t)}$ is the belief distribution at vertex $w$ and time $t$, we can express:
$$ \dd[\mu_w^{(t)}]{\theta/2\pi} = \sum_{k \in \ZZ} v_{w,k} e^{i k \theta}, $$
with $v_{w,0} = 1$ and $v_{w,-k} = \bar{v_{w,k}}$. Computing the distributional message $m_{w \to u}$ from $w$ to $u$, we obtain
$$ \dd[m_{w \to u}^{t+1}]{\theta/2\pi} = 1 + \sum_{1 \leq |k| \leq K} \lambda_k (Y_k)_{uw} \; v_{w,k} e^{i k \theta} + O((Y_\bullet)_{uw}^2), $$
where we take $Y_{-k} = \bar{Y_k}$.
As $(Y_k)_{uw}$ is order $1/\sqrt{n}$ in probability, this approximation will be asymptotically accurate. Thus it suffices to represent distributions by the coefficients $v_{w,k}$ with $|k| \leq K$. By conjugate symmetry, the coefficients with $1 \leq k \leq K$ suffice. The sufficiency of this finite description of each belief distribution is a key insight to our approach.

The other crucial observation concerns the remaining BP step of consolidating all incoming messages into a new belief distribution. As each incoming message is a small perturbation of the uniform distribution, the approximation $\log(1+x) \approx x$ allows us to express the log-density of the message distribution:
$$ \log \dd[m_{w \to u}^{t+1}]{\theta/2\pi} = \sum_{1 \leq |k| \leq K} \lambda_k (Y_k)_{uw} \; v_{w,k} e^{i k \theta} + O((Y_\bullet)_{uw}^2). $$
We now add these log-densities to obtain the log-density of the new belief distribution, up to normalization:
$$ \log \dd[\mu_u^{t+1}]{\theta/2\pi} + \mathrm{const.} = \sum_{1 \leq |k| \leq K} \left( \sum_{w \neq u} \lambda_k (Y_k)_{uw} v_{w,k} \right) e^{i k \theta} + O((Y_\bullet)_{uw}^2). $$
We thus obtain the Fourier coefficients of the \emph{log-density} of the new belief from the Fourier coefficients of the \emph{density} of the old belief, by matrix--vector products. Remarkably, this tells us that the correct per-vertex nonlinear transformation to apply at each iteration is the transformation from Fourier coefficients of the log-density to those of the density! (In section~\ref{sec:mmse-deriv} we will see an alternative interpretation of this nonlinear transformation as an MMSE estimator.)

The only constraints on a valid log-density are those of conjugate symmetry on Fourier coefficients; thus log-densities form an entire linear space. By contrast, densities are subject to non-negativity constraints, and form a nontrivial convex body in $\RR^K$. The latter space is the analogue of the unit disk or the interval $[-1,1]$ in the preceding examples, and this transformation from the Fourier series of a function to those of its exponential (together with normalization) forms the analogue of the preceding soft-projection functions.

We thus arrive at an AMP algorithm for the multiple-frequency problem:
\begin{algorithm}[AMP for $U(1)$ synchronization with multiple frequencies]---
\begin{itemize}
\item For each $1 \leq k \leq K$ and each vertex $u$, initialize $v_{u,k}^0$ to small random values in $\CC$.
\item Iterate for $1 \leq t \leq T$:
\begin{enumerate}
\item For each $1 \leq k \leq K$, set $c^{(t)}_k = \lambda_k Y_k v^{(t-1)}_k - \lambda_k^2 (1-\langle (v^{(t-1)}_k)^2 \rangle) v^{(t-2)}_k$, the vector of $k$th Fourier components of the estimated posterior log-densities, with Onsager correction.
\item Compute $v^{(t)}_k$, the vector of $k$th Fourier components of the estimated posterior densities.
\end{enumerate}
\item Return $\hat x = v^{(T)}_1$ (or some rounding if a proper estimate in $U(1)^n$ is desired, or even the entire per-vertex posteriors represented by $c^{(T)}$).
\end{itemize}
\end{algorithm}
\noindent Again, a more detailed derivation can be found in Sections~\ref{sec:bp-deriv} and~\ref{sec:mmse-deriv}.

It is worth emphasizing that only the expansion
$$ \log \dd[\mu_u^{(t)}]{\theta/2\pi} + \mathrm{const.} = 2 \Re \sum_{1 \leq k \leq K} c_{k,w}^{(t)} e^{i k \theta} $$
is an accurate expansion of the estimated vertex posteriors. While this log-density is band-limited, this still allows for the density to be a very spiked, concentrated function, without suffering effects such as the Gibbs phenomenon. By contrast, the finitely many $v$ coefficients that this algorithm computes do not suffice to express the Fourier expansion of the density, and a truncated expansion based on only the computed coefficients might even become negative.

We conclude this section by noting that nothing in our derivation depended crucially on the Gaussian observation model. The choice of model tells us how to propagate beliefs along an edge according to a matrix--vector product, but we could carry this out for a larger class of graphical models. The essential properties of a model, that enables this approach to adapt, are:
\begin{itemize}
\item The model can be expressed as a graphical model with only pairwise interactions:
$$ \Pr(x) \propto \prod_{u < w} \cL_{uw}(x_u,x_w). $$
\item The interaction graph is dense, with all pair potentials individually weak (here, $1 + O(1/\sqrt{n})$).
\item The pair potentials $\cL_{uw}(x_u,x_w)$ depend only on the group ratio $x_u x_w^{-1}$. (This is the core property of a \emph{synchronization} problem.)
\item The pair potentials $\cL_{uw}$ are band-limited as a function of $x_u x_w^{-1}$. This assumption (or approximation) allows the algorithm to track only finitely many Fourier coefficients. 
\end{itemize}
The formulation of AMP for general models of this form is discussed in the next section.

\section{AMP over general compact groups}\label{sec:alg}

The approach discussed above for $U(1)$-synchronization with multiple frequencies readily generalizes to the setting of an arbitrary compact group $G$, with Fourier theory generalized to the representation theory of $G$. Just as the Fourier characters are precisely the irreducible representations of $U(1)$, we will represent distributions over $G$ by an expansion in terms of irreducible representations, as described by the Peter--Weyl theorem. Under the assumption (or approximation) of band-limited pairwise observations, it will suffice to store a finite number of coefficients of this expansion. (Note that finite groups have a finite number of irreducible representations and so the band-limited requirement poses no restriction in this case.)

A geometric view on this is as follows. Belief propagation ideally sends messages in the space of distributions on $G$; this is a form of formal convex hull on $G$, and is illustrated in the case of $\ZZ/2$-synchronization by sending messages valued in $[-1,1]$. When $G$ is infinite, however, this space is infinite-dimensional and thus intractable. We could instead ask whether the convex hull of $G$ taken in some finite-dimensional embedding is a sufficient domain for messages. The key to our approach is the observation that, when observations are band-limited, it suffices to take an embedding of $G$ described by a sum of irreducible representations.

This section will be devoted to presenting our AMP algorithm in full generality, along with the synchronization model that it applies to. In particular, the algorithm can run on the general graphical model formulation of Section~\ref{sec:general-model}, but when we analyze its performance we will restrict to the Gaussian observation model of Section~\ref{sec:gaussian}.

\subsection{Representation theory preliminaries}

\subsubsection{Haar measure}

A crucial property of compact groups is the existence of a (normalized) Haar measure, a positive measure $\mu$ on the group that is invariant under left and right translation by any group element, normalized such that $\mu(G) = 1$ \cite{brocker-tomdieck}. This measure amounts to a concept of `uniform distribution' on such a group, and specializes to the ordinary uniform distribution on a finite group. Throughout this paper, integrals of the form
$$ \int_G f(g) \,\dee g, $$
are understood to be taken with respect to Haar measure.

\subsubsection{Peter--Weyl decomposition}

Fix a compact group $G$. We will be working with the density functions of distributions over $G$. In order to succinctly describe these, we use the representation-theoretic analogue of Fourier series: the Peter--Weyl decomposition. The Peter--Weyl theorem asserts that $L^2(G)$ (the space of square-integrable, complex scalar functions on $G$) is the closure of the span (with coefficients from $\CC$) of the following basis, which is furthermore orthonormal with respect to the Hermitian inner product on $L^2(G)$ \cite{brocker-tomdieck}:
$$ R_{\rho ab}(g) = \sqrt{d_\rho}\, \rho(g)_{ab}, $$
indexed over all complex irreducible representations $\rho$ of $G$, and all $1 \leq a \leq d_\rho$, $1 \leq b \leq d_\rho$ where $d_\rho = \dim \rho$. The representations are assumed unitary (without loss of generality). The inner product is taken to be conjugate-linear in the second input.

Since we want our algorithm to be able to store the description of a function using finite space, we fix a finite list $\mathcal{P}$ of irreducible representations to use. From now on, all Peter--Weyl decompositions will be assumed to only use representations from $\mathcal{P}$; we describe functions of this form as \emph{band-limited}. We 
exclude the trivial representation from this list because we will only need to describe functions up to an additive constant. Given a real-valued function $f: G \to \RR$, we will often write its Peter--Weyl expansion in the form
$$ f(g) = \sum_\rho \langle \hat{f}_\rho, R_\rho(g) \rangle, $$
where $\hat{f}_\rho$ and $R_\rho(g) = \sqrt{d_\rho}\, \rho(g)$ are $d_\rho \times d_\rho$ complex matrices. Here $\rho$ ranges over the irreducibles in $\mathcal{P}$; we assume that the functions $f$ we are working with can be expanded in terms of only these representations. The matrix inner product used here is defined by $\langle A,B \rangle = \mathrm{Tr}(AB^*)$. The Peter--Weyl coefficients of a function can be extracted by integration against the appropriate basis functions:
$$\hat{f}_\rho = \int_G R_\rho(g) f(g) \,\dee g.$$

\noindent By analogy to Fourier theory, we will sometimes refer to the coefficients $\hat{f}_\rho$ as \emph{Fourier  coefficients}, and refer to the irreducible representations as \emph{frequencies}.

\subsubsection{Representations of real, complex, quaternionic type}
\label{sec:types}

Every irreducible complex representation of a compact group $G$ over $\CC$ is of one of three types: real type, complex type, or quaternionic type \cite{brocker-tomdieck}. We will need to deal with each of these slightly differently. In particular, for each type we are interested in the properties of the Peter--Weyl coefficients that correspond to the represented function being real-valued.

A complex representation $\rho$ is of \emph{real type} if it can be defined over the reals, i.e.\ it is isomorphic to a real-valued representation. Thus we assume without loss of generality that we are working with a real-valued $\rho$. In this case it is clear that if $f$ is a real-valued function then (by integrating against $R_\rho$) $\hat{f}_\rho$ is real. Conversely, if $\hat{f}_\rho$ is real then the term $\langle \hat{f}_\rho, R_\rho(g)\rangle$ (from the Peter--Weyl expansion) is real.

A representation $\rho$ is of \emph{complex type} if $\rho$ is not isomorphic to its conjugate representation $\bar{\rho}$, which is the irreducible representation defined by $\bar{\rho}(g) = \bar{\rho(g)}$. We will assume that the complex-type representations in our list $\mathcal{P}$ come in pairs, i.e.\ if $\rho$ is on the list then so is $\bar{\rho}$. If $f$ is real-valued, we see (by integrating against $R_\rho$ and $\bar{R_\rho}$) that $\hat{f}_\rho = \bar{\hat{f}_{\bar{\rho}}}$. Conversely, if $\hat{f}_\rho = \bar{\hat{f}_{\bar{\rho}}}$ holds then $\langle \hat{f}_\rho, R_\rho(g)\rangle + \langle \hat{f}_{\bar\rho}, R_{\bar\rho}(g)\rangle$ is real.

Finally, a representation $\rho$ is of \emph{quaternionic type} if it can be defined over the quaternions in the following sense: $d_\rho$ is even and $\rho(g)$ is comprised of $2 \times 2$ blocks, each of which encodes a quaternion by the following relation:
$$a + bi + cj + dk \quad\leftrightarrow\quad \left(\begin{array}{cc} a+bi & c+di \\ -c+di & a-bi \end{array} \right).$$
Note that this relation respects quaternion addition and multiplication. Furthermore, quaternion conjugation (negate $b,c,d$) corresponds to matrix conjugate transpose. If a matrix is comprised of $2 \times 2$ blocks of this form, we will call it \emph{block-quaternion}. Now let $\rho$ be of quaternionic type (and assume without loss of generality that $\rho$ takes the above block-quaternion form), and let $f$ be a real function $G \to \mathbb{R}$. By integrating against $R_\rho$ we see that $\hat{f}_\rho$ must also be block-quaternion. Conversely, if $\hat{f}_\rho$ is block-quaternion then $\langle \hat{f}_\rho, R_\rho(g) \rangle$ is real; to see this, write $\langle \hat{f}_\rho, R_\rho(g) \rangle = \mathrm{Tr}(\hat{f}_\rho R_\rho(g)^*)$, note that $\hat{f}_\rho R_\rho(g)^*$ is block-quaternion, and note that the trace of any quaternion block is real.

\subsection{Graphical model formulation}\label{sec:general-model}
As in Section~\ref{sec:bp-amp}, we take the standpoint of probabilistic inference over a graphical model. Thus we consider the task of estimating $g \in G^n$ from observations that induce a posterior probability factoring into pairwise likelihoods:
\begin{equation} \Pr(g) \propto \prod_{\{u,w\}} \cL_{uv}(g_u,g_w) \label{eq:graphical-model} \end{equation}
where $\{u,w\}$ ranges over all (undirected) edges (without self-loops). We assume that the pair interactions $\cL_{uw}(g_u,g_w)$ are in fact a function of $g_u g_w^{-1} \in G$, depending only on the relative orientation of the group elements. Without loss of generality, we can take $\cL_{uu} = 0$ for all $u$. This factorization property amounts to a graphical model for $g$, with each entry $g_u \in G$ corresponding to a vertex $u$, and each pair interaction $\cL_{uw}$ represented by an edge of the model.

Taking a Peter--Weyl decomposition of $\cL_{uw}$ as a function of $g_u g_w^{-1}$ allows us to write:
$$ \cL_{uw}(g_u,g_w) = \exp \sum_{\rho} \left\langle Y^{\rho}_{uw}, \rho(g_u g_w^{-1}) \right\rangle, $$
where $\rho$ runs over all irreducible representations of $G$.
We require coefficients $Y_{uv}^\rho \in \mathbb{C}^{d_\rho \times d_\rho}$ for which this expansion is real-valued (for all $g_u g_v^{-1}$). We also require the symmetry $\cL_{uv}(g_u,g_v) = \cL_{vu}(g_v,g_u)$, which means $Y_{uv}^\rho = (Y_{vu}^\rho)^*$. Let $Y_\rho$ be the $nd_\rho \times nd_\rho$ matrix with blocks $Y_{uv}^\rho$.

The input to our synchronization problem will simply be the coefficients $Y_\rho$. These define a posterior distribution $\mu$ on the latent vector $g$ of group elements, and our goal is to approximately recover $g$ up to a global right-multiplication by some group element.

We suppose that the observations are \emph{band-limited}: $Y_\rho = 0$ except on a finite set $\cP$ of irreducible representations. This will allow us to reduce all Peter--Weyl decompositions to a finite amount of relevant information. We will always exclude the trivial representation from $\cP$: this representation can only contribute a constant factor to each pair likelihood, which then disappears in the normalization, so that without loss of generality we can assume the coefficient of the trivial representation to always be zero.

One might also formulate a version of this model that allows node potentials:
$$ \Pr(g) \propto \left( \prod_{\{u,w\}} \cL_{uv}(g_u,g_w) \right) \left( \prod_u \cL_u(g_u) \right), $$
expressing a nontrivial prior or observation on each group element. Although this model is compatible with our methods (so long as the node potentials are also band-limited), we suppress this generality for the sake of readability.

Many synchronization problems (for instance, sensor localization) have noise on each pairwise measurement, and fit this graphical model formulation well. Other synchronization problems (for instance, cryo-EM) are based on per-vertex measurements with independent randomness; one can derive pairwise information by comparing these measurements, but these pairwise measurements do not have independent noise and do not strictly fit the model described above. We note that prior work has often run into the same issue and achieved strong results nonetheless. Indeed, this model is closely related to the ``non-unique games'' model of \cite{nug} for which the application to cryo-EM is discussed; this model suggests minimizing an objective of the form (\ref{eq:graphical-model}) without interpreting it as a posterior likelihood. We defer a close examination of message-passing algorithms for synchronization problems with per-vertex noise to future work.

\subsection{AMP algorithm}

We now state our AMP algorithm. The algorithm takes as input the log-likelihood coefficients $Y_\rho \in \mathbb{C}^{nd_\rho \times nd_\rho}$, for each $\rho$ in a finite list $\cP$ of irreducibles (which must not contain the trivial representation; also for each complex-type representation $\rho$ in the list, $\bar\rho$ must also appear in the list). The algorithm's state at time $t$ is comprised of the Fourier coefficients $C_\rho^{(t)} \in \mathbb{C}^{nd_\rho \times d_\rho}$, which are updated as follows.

\begin{algorithm}[AMP for synchronization over compact groups]---
\begin{itemize}
\item For each $\rho \in \cP$ and each vertex $u$, initialize $C_{u,\rho}^{(0)}$ to small random values in $\CC$.
\item Iterate for $1 \leq t \leq T$:
\begin{itemize}
\item For each $\rho \in \cP$, set
$$ C_{u,\rho}^{(t)} = d_\rho^{-1} \sum_{w \neq u} Y_{uw}^{\rho} V_{w,\rho}^{(t-1)} - d_\rho^{-2} |Y_\mathrm{typ}^{\rho}|^2 V_{u,\rho}^{(t-2)} \sum_w \left(d_\rho I - (V_{w,\rho}^{(t-1)})^* V_{w,\rho}^{(t-1)} \right), $$
the Fourier coefficients of the estimated posterior log-densities, with Onsager correction. Here $|Y_\mathrm{typ}^{\rho}|^2$ denotes the average squared-norm of the entries of $Y^\rho$.
\item For each $u$ and each $\rho \in \cP$, set $V_{u,\rho}^{(t)} = \cE_\rho(C_u^{(t)})$, where as in Section~\ref{sec:amp-u1-mult}, the nonlinear transformation
\begin{equation}\mathcal{E}_\rho(C) = \int_G {R_{\rho}(g)} \exp \left(\sum_{\rho'} \left\langle {C_{\rho'}}, {R_{\rho'}(g)} \right\rangle \right) \dee g \Big/ \int_G \exp \left(\sum_{\rho'} \left\langle {C_{\rho'}}, {R_{\rho'}(g)} \right\rangle \right) \dee g \label{eq:cE} \end{equation}
takes the Fourier coefficients for a function $f$ on $G$ and returns those of $\exp \circ f$, re-normalized to have integral $1$. These $V_{u,\rho}^{(t)}$ are the Fourier coefficients of the estimated posterior densities, truncated to the contribution from irreducibles $\cP$, which suffice for the next iteration.
\end{itemize}
\item Return the posteriors represented by $C_{u,\rho}^{(T)}$, or some rounding of these (e.g. the per-vertex MAP estimate).
\end{itemize}
\end{algorithm}
\noindent This algorithm follows the intuition of Section~\ref{sec:intuition}, and derivations can be found in Sections~\ref{sec:bp-deriv} and~\ref{sec:mmse-deriv}.

Note that each iteration runs in time $O(n^2)$, which is linear in the input matrices. This runtime is due to the matrix--vector products; the rest of the iteration takes $O(n)$ time. We expect $O(\log n)$ iterations to suffice, resulting in a nearly-linear-time algorithm with respect to the matrix inputs. Some applications may derive from per-vertex observations that are pairwise compared to produce edge observations, hacked into this framework by an abuse of probability; our algorithm will then takes nearly-quadratic time with respect to the vertex observations. However, some such per-vertex applications produce matrices with a low-rank factorization $Y_\rho = U_\rho U_\rho^\top$, for which the matrix--vector product can be performed in $O(n)$ time.

\subsection{Gaussian observation model}
\label{sec:gaussian}

Our AMP algorithm handles the general graphical model formulation above, but we will be able to analyze its performance in more detail when restricted to the following concrete Gaussian observation model (which we first introduced in \cite{pwbm-contiguity}), generalizing the Gaussian models of Section~\ref{sec:intuition}. First, latent group elements $g_u$ are drawn independently and uniformly from $G$ (from Haar measure). Then for each representation $\rho$ in $\mathcal{P}$, we observe the $nd_\rho \times nd_\rho$ matrix
$$M_\rho = \frac{{\lambda_\rho}}{n} X_\rho X_\rho^* + \frac{1}{\sqrt{n d_\rho}} W_\rho.$$
Here $X_\rho$ is the $nd_\rho \times d_\rho$ matrix formed by vertically stacking the $d_\rho \times d_\rho$ matrices $\rho(g_u)$ for all vertices $u$. $\lambda_\rho$ is a signal-to-noise parameter for the frequency $\rho$. The noise $W_\rho$ is a Gaussian random matrix drawn from the GOE (Gaussian orthogonal ensemble), GUE, or GSE, depending on whether $\rho$ is of real type, complex type, or quaternionic type, respectively. In any case, $W_\rho$ is normalized so that each off-diagonal entry has expected squared-norm 1. To be concrete, in the real case the entries are $\mathcal{N}(0,1)$ and in the complex case, the real and imaginary parts of each entry are $\mathcal{N}(0,1/2)$. For the quaternionic case, each $2 \times 2$ block encodes a quaternion value $a + bi + cj + dk$ in the usual way (see Section~\ref{sec:types}) where $a,b,c,d$ are $\mathcal{N}(0,1/2)$. The noise matrices $W_\rho$ are independent across representations except when we have a conjugate pair of complex-type representations we draw $M_\rho$ randomly as above and define $M_{\bar\rho} = \bar{M_\rho}$ and $\lambda_{\bar\rho} = \lambda_\rho$. Note that the normalization is such that the signal term has spectral norm $\lambda_\rho$ and the noise term has spectral norm $2$.

Special cases of this model have appeared previously: $\mathbb{Z}/2$ with one frequency \cite{dam,sdp-phase} and $U(1)$ with one frequency \cite{sdp-phase}. In fact, \cite{dam} derives AMP for the $\mathbb{Z}/2$ case and proves that it is information-theoretically optimal. We introduced the general model in \cite{pwbm-contiguity}, which presents some statistical lower bounds.

In Appendix~\ref{app:gaussian-loglikelihood}, we show how the Gaussian observation model fits into the graphical model formulation by deriving the corresponding coefficient matrices $Y_\rho$. In particular, we show that $Y_\rho = d_\rho \lambda_\rho M_\rho$, a scalar multiple of the observed Gaussian matrix.

\subsection{Representation theory of some common examples}\label{sec:rep-examples}

In this section we discuss the representation theory of a few central examples, namely $\ZZ/L$, $U(1)$, and $SO(3)$, and connect the general formalism back to the examples of Section~\ref{sec:intuition}.

\paragraph{Representations of $\ZZ/L$ and $U(1)$.} The irreducible representations of these groups are one-dimensional, described by the discrete Fourier transform and the Fourier series, respectively. $U(1)$ has frequencies indexed by $k \in \mathbb{Z}$, given by $\rho(g) = g^k$ where $g \in U(1)$ (i.e.\ a unit-norm complex number). All of these representations are of complex type. We will say ``$U(1)$ with $K$ frequencies'' to refer to the frequencies $1,\ldots,K$ along with their conjugates, the frequencies $-1,\ldots,-K$. Similarly, if we identify $\mathbb{Z}/L$ with the complex $L$th roots of unity, we have frequencies defined the same way as above, except to avoid redundancy we restrict the range of $k$ as follows. If $L$ is odd, we allow $k \in \{1,2,\ldots,(L-1)/2\}$ along with their conjugates (negations). If $L$ is even, we have complex-type representations $k \in \{1,2,\ldots,L/2 - 1\}$ (along with their conjugates), plus an additional real-type representation $k = L/2$. Again, ``$\mathbb{Z}/L$ with $K$ frequencies'' means we take frequencies $1,\ldots,K$ along with their conjugates (when applicable).

For the case of $\mathbb{Z}/2$ we can now see how the $\tanh$ function from the AMP algorithm of Section~\ref{sec:bp-amp} arises as a special case of the nonlinear transformation $\cE$ occurring in AMP. The only nontrivial representation of $\mathbb{Z}/2$ is the `parity' representation in which $-1$ acts as $-1$. In this context, $\cE$ will first input the Fourier series of a log-density with respect to the uniform measure on $\{\pm 1\}$:
$$ \log \dd[\mu_u]{x} + \mathrm{const.} = c x $$
and evaluate this at $\pm 1$ to obtain the values $\pm c$. We then compute the exponential of this at each point to obtain the un-normalized density of $e^{-c}$ at $-1$ and $e^c$ at $1$. Normalizing, the density values are $e^{-c}/(e^{-c} + e^c)$ and $e^c/(e^{-c} + e^c)$, so that the new parity coefficient is
$$ \cE_{\mathrm{parity}}(c) = \frac{e^c - e^{-c}}{e^c + e^{-c}} = \tanh c. $$

\paragraph{Representations of $SO(3)$.} This group has one irreducible representation $\rho_k$ of each odd dimension $d_k = 2k+1$; thus the $k=0$ representation is the trivial representation $\rho_0(g) = 1$, and the $k=1$ representation is the standard representation of $SO(3)$ as rotations of three-dimensional space. All of these representations are of real type, and may be described as the action of rotations on the $2k+1$-dimensional space of homogeneous degree $k$ spherical harmonics. Frequently in the literature (for instance in molecular chemistry), a complex basis for the spherical harmonics is given, and the representation matrices are the complex-valued Wigner D-matrices; however, the representation can be defined over the reals, as is demonstrated by any real orthogonal basis for the spherical harmonics. See e.g.\ Section II.5 of \cite{brocker-tomdieck} for a more detailed account. As in the cases above, we will often refer to synchronization problems over ``$SO(3)$ with $K$ frequencies'', in which the observations are assumed to be band-limited to the first $K$ nontrivial irreducibles with $1 \leq k \leq K$.

\section{Experimental results}
\label{sec:experiments}

We present a brief empirical exploration of the statistical performance of AMP in various settings, and as compared to other algorithms.

\begin{figure}[!ht]\centering
\begin{minipage}{0.8\textwidth}\centering
\includegraphics[width=0.7\linewidth]{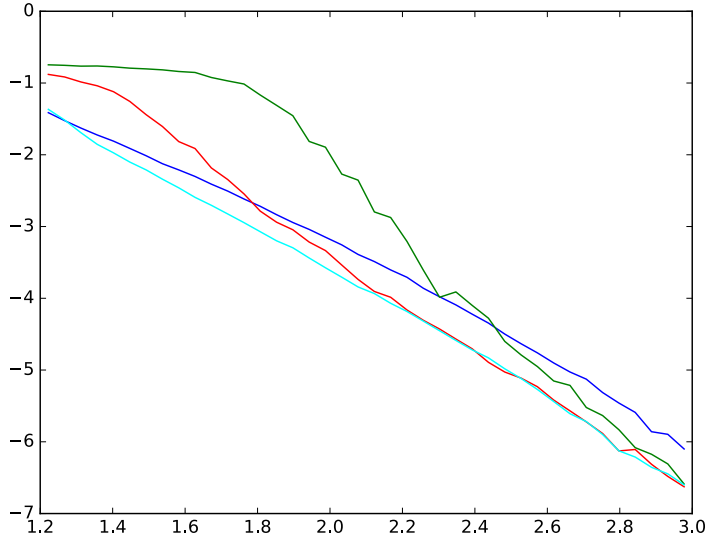}
\caption{Comparison of iterative algorithms for Gaussian $\ZZ/2$ synchronization. The horizontal axis represents the signal-to-noise ratio $\lambda$, and the vertical axis depicts the log-error $\ln(1-|\langle x,\hat x \rangle/n|)$ where $x \in \{\pm 1\}^n$ is the ground truth and $\hat x \in \{\pm 1\}^n$ is the (rounded) output of the algorithm. From top to bottom, as measured on the left side: projected power iteration (green), soft-threshold power iteration (red), spectral method (blue), and AMP (cyan). Each data point is an average of $200$ trials with $n=2000$ vertices.}
\label{fig:z2-compare}
\end{minipage}
\end{figure}

In Figure~\ref{fig:z2-compare} we compare the performance of the spectral method, projected power iteration, soft-threshold power iteration without an Onsager correction, and full AMP (see Sections~\ref{sec:z2} and~\ref{sec:bp-amp}) for Gaussian $\mathbb{Z}/2$ synchronization. The spectral method achieves the optimal threshold of $\lambda = 1$ as to when nontrivial recovery is possible, but does not achieve the optimal correlation afterwards. The projected power method appears to asymptotically achieve the optimal correlation as $\lambda \to \infty$, but performs worse than the spectral method for small $\lambda$. Soft-thresholding offers a reasonable improvement on this, but the full AMP algorithm strictly outperforms all other methods. This reflects the optimality result of \cite{dam} and highlights the necessity for the Onsager term. The gains are fairly modest in this setting, but increase with more complicated synchronization problems.

\begin{figure}[!ht]
    \centering
    \begin{minipage}{0.47\textwidth}
    \includegraphics[width=\linewidth]{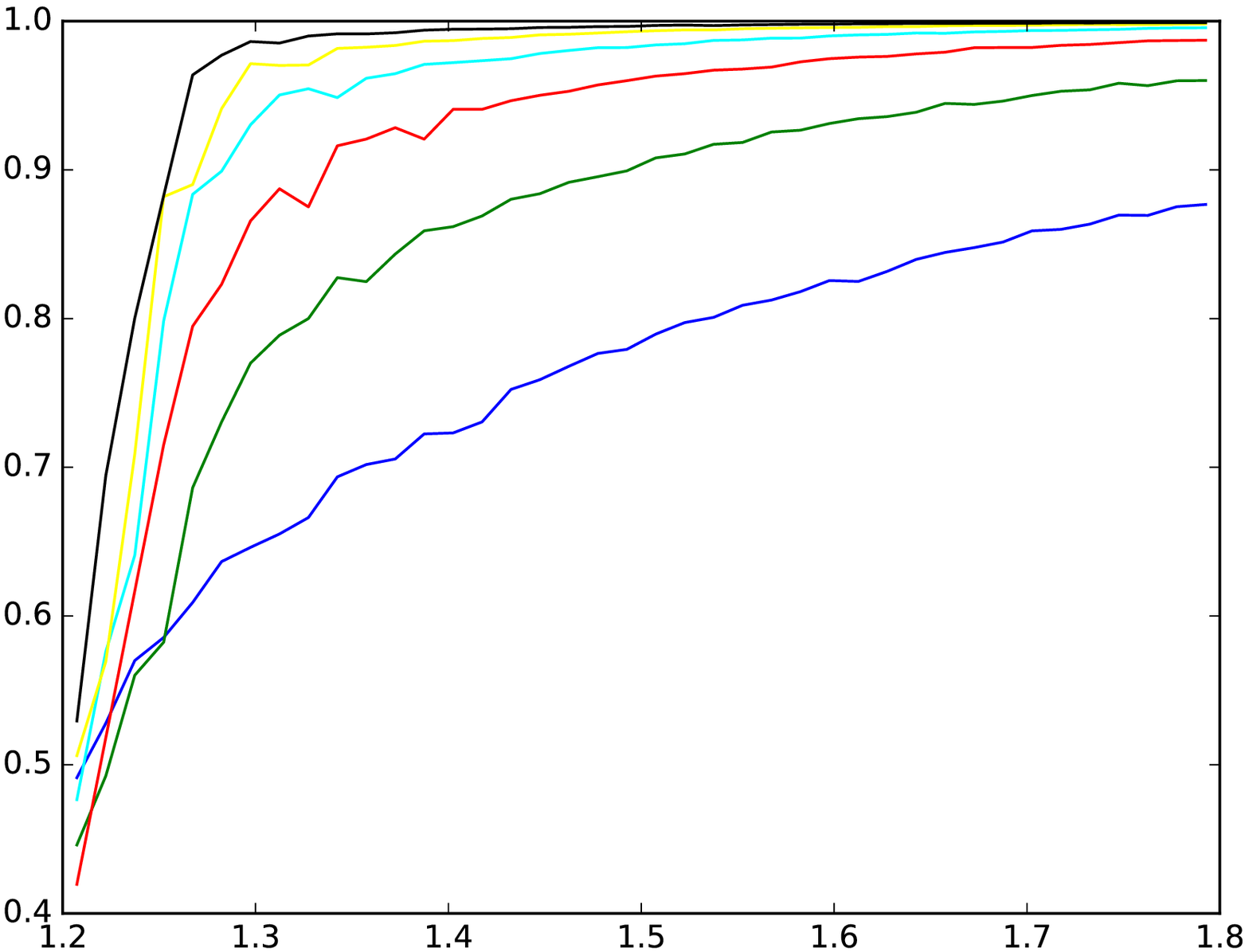}
    \captionof{figure}{Gaussian $U(1)$ synchronization with $K$ frequencies; from bottom to top, $K = 1, \ldots, 6$. The signal-to-noise ratios $\lambda_k$ are all equal, with the common value given by the horizontal axis. Each curve depicts the correlation $|\langle x, \hat x \rangle/n|$ between the ground truth and the AMP estimate. Each data point is an average of $50$ trials with $n=1000$ vertices.}
    \label{fig:u1-many}
    \end{minipage}\hfill%
    \begin{minipage}{0.47\textwidth}
    \includegraphics[width=\linewidth]{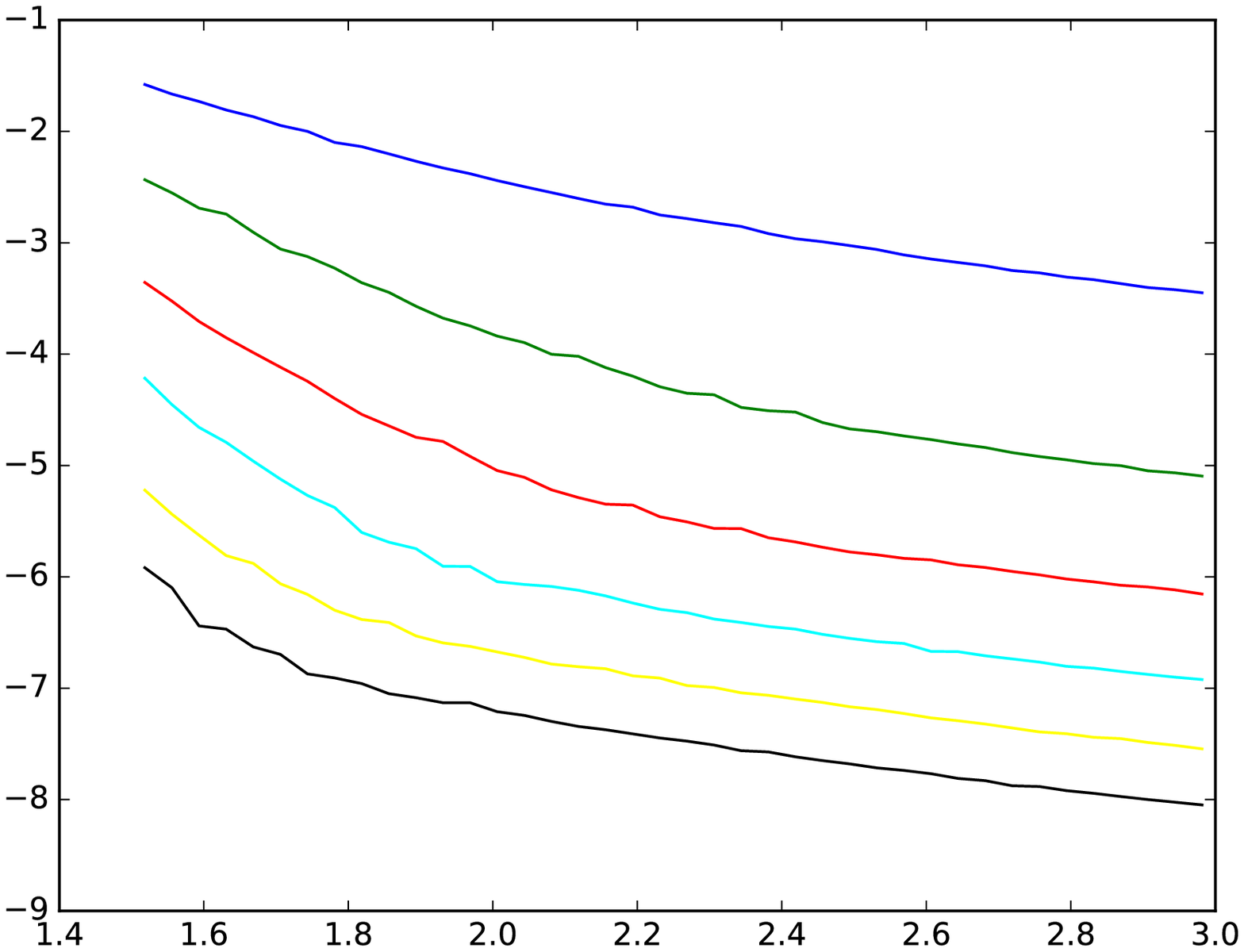}
    \captionof{figure}{Here the vertical axis depicts the log-error $\ln(1-|\langle x, \hat x \rangle/n|)$. From top to bottom: $K = 1, \ldots, 6$.\\[4.5em]}
    \label{fig:u1-many-log}
    \end{minipage}
\end{figure}

Figures~\ref{fig:u1-many} and~\ref{fig:u1-many-log} compare the performance of AMP on Gaussian $U(1)$ synchronization with multiple frequencies; see Section~\ref{sec:amp-u1-mult} for the model. In sharp contrast to spectral methods, which offer no reasonable way to couple the frequencies together, AMP produces an estimate that is orders of magnitude more accurate than what is possible with a single frequency.

In Figures~\ref{fig:so3-many} and~\ref{fig:so3-many-log}, we see similar results over $SO(3)$, under the Gaussian model of Section~\ref{sec:gaussian}. This also demonstrates that AMP is an effective synchronization algorithm for more complicated, non-abelian Lie groups. 

\begin{figure}[!ht]
    \centering
    \begin{minipage}{0.47\textwidth}
    \includegraphics[width=\linewidth]{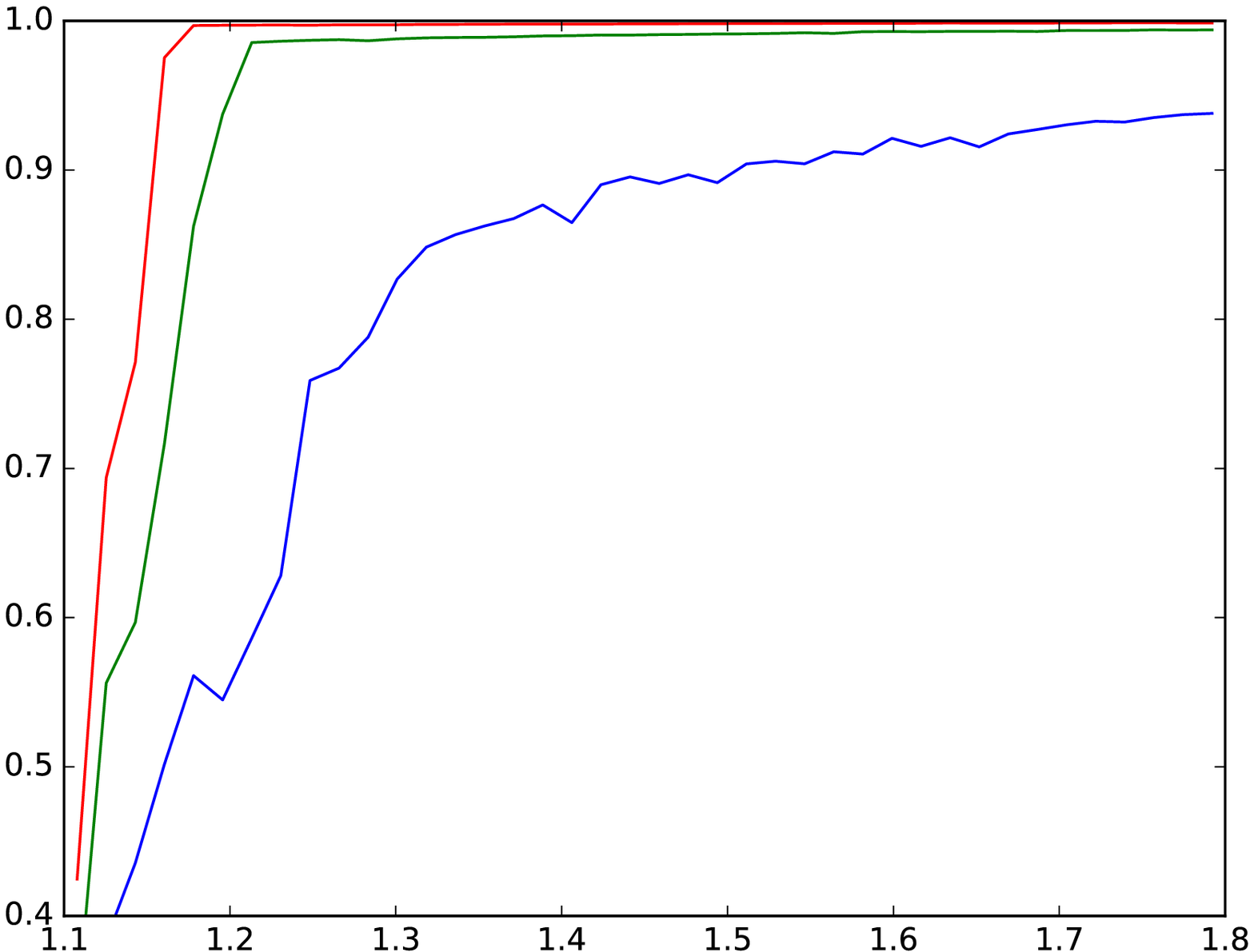}
    \captionof{figure}{Gaussian $SO(3)$ synchronization with $K$ frequencies; from bottom to top, $K = 1, 2, 3$. The signal-to-noise ratios $\lambda_k$ are all equal, with the common value given by the horizontal axis. Each curve depicts the squared correlation $\| X^\top \hat X \|_F/(n\sqrt{3})$ between the ground truth and the AMP estimate. Here $X$ and $\hat X$ are $3n \times n$ matrices where each $3 \times 3$ block encodes an element of $SO(3)$ via the standard representation (rotation matrices). Each data point is an average of $5$ trials with $n=100$ vertices.}
    \label{fig:so3-many}
    \end{minipage}\hfill%
    \begin{minipage}{0.47\textwidth}
    \includegraphics[width=\linewidth]{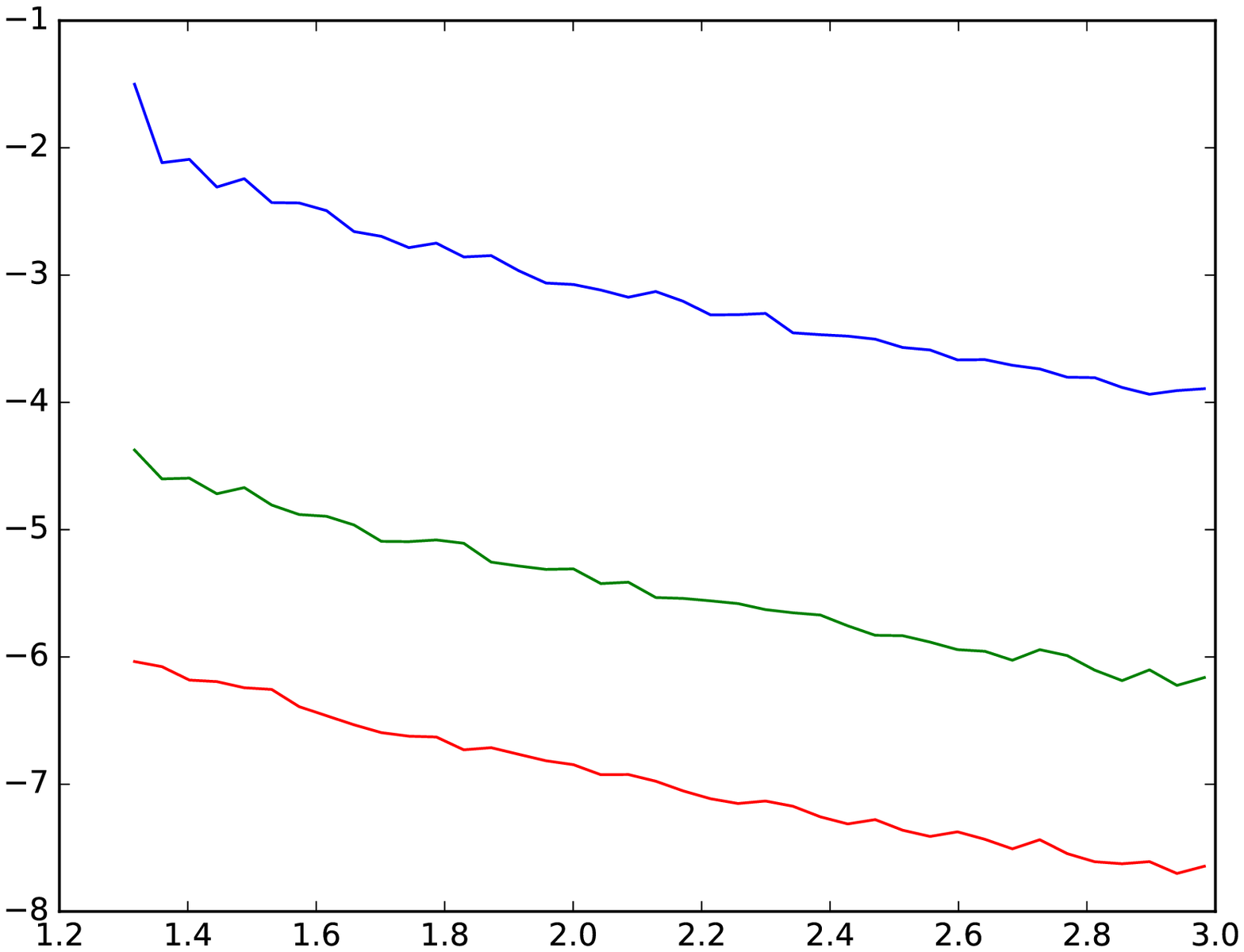}
    \captionof{figure}{Here the vertical axis depicts the log-error $\ln(1-\| X^\top \hat X \|_F/(n\sqrt{3}))$. From top to bottom: $K = 1, 2, 3$.\\[4.5em]}
    \label{fig:so3-many-log}
    \end{minipage}
\end{figure}
This ability to exploit multiple frequencies represents a promising step toward improved algorithms for cryo-electron microscopy, which may be viewed as a synchronization problem over $SO(3)$. Some previous approaches to this problem effectively band-limit the observations to a single frequency and then apply a spectral method \cite{singer-shkolnisky}, and the experiments in Figures~\ref{fig:u1-many}--\ref{fig:so3-many-log} demonstrate that our algorithm stands a compelling chance of achieving a higher-quality reconstruction. 

We remark that some numerical issues arise when computing the nonlinear transformation $\cE$ in our AMP algorithm, which involves integration over the group. Our implementation of $\cE$ for $U(1)$ and $SO(3)$ is based on evaluating each log-density on a discretization of the group, taking the pointwise exponential, and thus approximating each integral by a discrete sum. This approach is somewhat crude but appears to work adequately in our experiments; there is undoubtedly room for this numerical procedure to be improved. More sophisticated methods may be necessary to obtain adequate results on any higher-dimensional Lie groups. Note also that if the vertex posterior in question is extremely concentrated near a point, the numerical value of each integral will depend significantly on whether this spike lies near a discretization point; however, this should affect both the numerator and denominator integrals in (\ref{eq:cE}) by approximately equal factors, so as to have a minimal effect on the normalized value of $\cE_\rho$.


\FloatBarrier

\section{Derivation of AMP from belief propagation}
\label{sec:bp-deriv}


In this section we derive the general AMP algorithm of Section~\ref{sec:alg} starting from belief propagation, similarly to \cite{amp-mot}. We begin with the belief propagation update step (see e.g.\ \cite{mm-book}), writing messages $\mu_{u \to v}^{(t)}$ as densities with respect to Haar measure:
$$ \dd[\mu_{u \to v}^{(t)}]{g_u} = \frac{1}{Z_{u \to v}^{(t)}} \prod_{w \neq u,v} \int_G \cL_{uw}(g_u, g_w) \dd[\mu_{w \to u}^{(t-1)}]{g_w} \,\dee g_w. $$
Here $t$ denotes the timestep and $Z_{u \to v}^{(t)}$ is the appropriate normalization constant. Expand this (positive) probability density as the exponential of an $L^2$ function, expressed as a Peter--Weyl expansion:

$$ \dd[\mu_{u \to v}^{(t)}]{g_u} = \exp \sum_{\rho ab} {C_{u \to v, \rho ab}^{(t)}} \,\bar{R_{\rho ab}(g_u)}. $$

We can extract these Fourier coefficients $C_{u\to v,\rho ab}^{(t)}$ by integrating against the basis functions above. Assume that $\rho$ is not the trivial representation; then:
\begin{align*}
C_{u \to v, \rho ab}^{(t)}
&= \int_G {R_{\rho ab}(g_u)} \log \dd[\mu_{u \to v}^{(t-1)}]{g_u} \,\dee g_u \\
&= \int_G {R_{\rho ab}(g_u)} \log\left( \frac{1}{Z_{u \to v}^{(t-1)}} \prod_{w \neq u,v} \int_G \cL_{uw}(g_u, g_w) \dd[\mu_{w \to u}^{(t-1)}]{g_w} \,\dee g_w \right) \dee g_u \\
&= \int_G {R_{\rho ab}(g_u)} \sum_{w \neq u,v} \log\left( \int_G \exp\left( \sum_{\rho'} \left\langle Y^{\rho'}_{uw}, \rho'(g_u g_w^{-1})\right\rangle \right) \dd[\mu_{w \to u}^{(t-1)}]{g_w} \,\dee g_w \right) \dee g_u.
\end{align*}
As the $Y_{uw}^{\rho'}$ are small, we can pass to a linear expansion about these, incurring $o(1)$ error as $n \to \infty$:
\begin{align*}
&\approx \int_G {R_{\rho ab}(g_u)} \sum_{w \neq u,v} \left( \log \int_G \dd[\mu_{w \to u}^{(t-1)}]{g_w} \,\dee g_w + \sum_{\rho'} \int_G {\left\langle Y^{\rho'}_{uw}, \rho'(g_u g_w^{-1})\right\rangle}  \dd[\mu_{w \to u}^{(t-1)}]{g_w} \,\dee g_w \bigg/ \int_G \dd[\mu_{w \to u}^{(t-1)}]{g_w} \,\dee g_w \right) \dee g_u \\
&= \sum_{w \neq u,v} \sum_{\rho'} \int_G \int_G {R_{\rho ab}(g_u)} {\left\langle Y^{\rho'}_{uw}, \rho'(g_u g_w^{-1})\right\rangle}  \dd[\mu_{w \to u}^{(t-1)}]{g_w} \,\dee g_w \,\dee g_u.
\end{align*}
To progress further, we will expand the middle factor of the integrand:
\begin{align*}
\left\langle Y_{uw}^{\rho'}, \rho'(g_u g_w^{-1}) \right\rangle
&= \left\langle Y_{uw}^{\rho'}, \rho'(g_u) \rho'(g_w)^* \right\rangle \\
&= \sum_{a'b'c'} {Y_{uw,a'b'}^{\rho'}}\; \bar{\rho'(g_u)_{a'c'}}\; {\rho'(g_w)_{b'c'}}.
\end{align*}
Returning to the previous derivation:
\begin{align*}
C_{u \to v, \rho ab}^{(t)}
&= \sum_{w \neq u,v} \sum_{\rho'} \sum_{a'b'c'} \int_G \int_G {R_{\rho ab}(g_u)} Y_{uw,a'b'}^{\rho'} \bar{\rho'(g_u)_{a'c'}} {\rho'(g_w)_{b'c'}} \dd[\mu_{w \to u}^{(t-1)}]{g_w} \,\dee g_w \,\dee g_u \\
&= \sum_{w \neq u,v} \sum_{\rho'} \sum_{a'b'c'} Y_{uw,a'b'}^{\rho'} \int_G {R_{\rho ab}(g_u)} \bar{\rho'(g_u)_{a'c'}}\,\dee g_u \cdot \int_G {\rho'(g_w)_{b'c'}} \dd[\mu_{w \to u}^{(t-1)}]{g_w} \,\dee g_w \\
&= d_{\rho'}^{-1} \sum_{w \neq u,v} \sum_{\rho'} \sum_{a'b'c'} Y_{uw,a'b'}^{\rho'} \int_G {R_{\rho ab}(g_u)} \bar{R_{\rho' a' c'}(g_u)}\,\dee g_u \cdot \int_G {R_{\rho' b' c'}(g_w)} \dd[\mu_{w \to u}^{(t-1)}]{g_w} \,\dee g_w \\
&= d_{\rho'}^{-1} \sum_{w \neq u,v} \sum_{\rho'} \sum_{a'b'c'} Y_{uw,a'b'}^{\rho'} \delta_{\rho,\rho'} \delta_{a,a'} \delta_{b,c'} \int_G {R_{\rho' b' c'}(g_w)} \dd[\mu_{w \to u}^{(t-1)}]{g_w} \,\dee g_w \\
&= d_\rho^{-1} \sum_{w \neq u,v} \sum_{b'} Y_{uw,ab'}^{\rho} \int_G {R_{\rho b' b}(g_w)} \dd[\mu_{w \to u}^{(t-1)}]{g_w} \,\dee g_w.
\end{align*}
In matrix form,
$$ C_{u \to v, \rho}^{(t)} = d_\rho^{-1} \sum_{w \neq u,v} Y_{uw}^{\rho} \int_G {R_{\rho}(g_w)} \dd[\mu_{w \to u}^{(t-1)}]{g_w} \,\dee g_w. $$

Let $\cE : \bigoplus_\rho \CC^{d_\rho \times d_\rho} \to \bigoplus_\rho \CC^{d_\rho \times d_\rho}$ denote the transformation from the nontrivial Fourier coefficients $C_{u \to v,\rho}$ of $\log \dd[\mu_{u \to v}^{(t)}]{g_u}$ to the Fourier coefficients of $\dd[\mu_{u \to v}^{(t)}]{g_u}$. Then we have
$$ C_{u \to v, \rho}^{(t)} = d_\rho^{-1} \sum_{w \neq u,v} Y_{uw}^{\rho} \cE_\rho(C_{w \to u}^{(t-1)}). $$
The map $\cE$ amounts to exponentiation in the evaluation basis, except that the trivial Fourier coefficient is missing from the input, causing an unknown additive shift. This corresponds to an unknown multiplicative shift in the output, which we correct for by noting that $\dd[\mu_{u \to v}^{(t)}]{g_u}$ should normalize to $1$. Thus $\cE$ amounts to exponentiation followed by normalization.

Explicitly, we can let
$$ I_{\rho ab}(C) = \int_G {R_{\rho ab}(g)} \exp \left(\sum_{\rho' a'b'} {C_{\rho' a'b'}} \bar{R_{\rho' a'b'}(g)} \right) \dee g. $$
Then $\cE_{\rho ab}(C) = I_{\rho ab}(C) / I_{\triv}(C)$ where $\triv$ denotes the trivial representation $R_\triv(g) = 1$.

\subsection{Onsager correction}

In this section we complete the derivation of AMP by replacing the non-backtracking nature by an \emph{Onsager correction} term, reducing the number of messages from $n^2$ to $n$. This is similar to the derivation in Appendix~A of \cite{bm}.

In order to remove the non-backtracking nature of the AMP recurrence, let us define

\begin{align*}
C_{u,\rho}^{(t)} &= d_\rho^{-1} \sum_{w \neq u} Y_{uw}^{\rho} \cE_\rho(C_{w \to u}^{(t-1)}) \\
&= C_{u \to v,\rho}^{(t)} + \delta_{u \to v,\rho}^{(t)},
\end{align*}
where $\delta_{u \to v,\rho}^{(t)} = d_\rho^{-1} Y_{uv}^{\rho} \cE_\rho(C_{v \to u}^{(t-1)})$. Then, substituting $C_{w \to u}^{(t-1)} = C_w^{(t-1)} - \delta_{w \to u}^{(t-1)}$, we have
\begin{align*}
C_{u,\rho}^{(t)} &= d_\rho^{-1} \sum_{w \neq u} Y_{uw}^{\rho} \cE_\rho\left( \{ C_{w,\rho'}^{(t-1)} - d_{\rho'}^{-1} Y_{wu}^{\rho'} \cE_{\rho'}(C_{u \to w}^{(t-2)}) \}_{\rho'} \right) \\
&\approx d_\rho^{-1} \sum_{w \neq u} Y_{uw}^{\rho} \cE_\rho(C_{w}^{(t-1)}) - d_\rho^{-1} \sum_{w \neq u} Y_{uw}^{\rho} D\cE_\rho\big|_{C_w^{(t-1)}}\left[ \{ d_{\rho'}^{-1} Y_{wu}^{\rho'} \cE_{\rho'}(C_{u \to w}^{(t-2)}) \}_{\rho'} \right] \\
\intertext{where $D$ denotes the total derivative}
&\approx d_\rho^{-1} \sum_{w \neq u} Y_{uw}^{\rho} \cE_\rho(C_{w}^{(t-1)}) - d_\rho^{-1} \sum_{w \neq u} Y_{uw}^{\rho} D\cE_\rho\big|_{C_{w}^{(t-1)}}\left[ \{ d_{\rho'}^{-1} Y_{wu}^{\rho'} \cE_{\rho'}(C_u^{(t-2)}) \}_{\rho'} \right].
\end{align*}

Under the assumption that $Y$ consists of per-edge $O(n^{-1/2})$ noise and $O(n^{-1})$ signal, the error incurred in these two steps should be $o(1)$. We thus reach an entirely non-backtracking recurrence where the first term is a message-passing step and the second term is the so-called \emph{Onsager correction}. It remains to simplify this. We focus on a single matrix coefficient of the correction:
\begin{align*}
\mathrm{Ons}_{u,\rho ab}^{(t)} &= d_\rho^{-1} \sum_{w \neq u} \sum_c Y_{uw,ac}^{\rho} D\cE_\rho\big|_{C_w^{(t-1)}}\left[ \{ d_{\rho'}^{-1} Y_{wu}^{\rho'} \cE_{\rho'}(C_u^{(t-2)}) \} \right]_{cb} \\
&= d_\rho^{-1} \sum_{w \neq u} \sum_{\rho' cef} Y_{uw,ac}^{\rho} \ddp[\cE_{\rho cb}]{C_{\rho' ef}}\big|_{C_w^{(t-1)}} (d_{\rho'}^{-1} Y_{wu}^{\rho'} \cE_{\rho'}(C_u^{(t-2)}))_{ef} \\
&= d_\rho^{-1} \sum_{w \neq u} \sum_{\rho' cefh} Y_{uw,ac}^{\rho} \ddp[\cE_{\rho cb}]{C_{\rho' ef}}\big|_{C_w^{(t-1)}} d_{\rho'}^{-1} Y_{wu,eh}^{\rho'} \cE_{\rho' hf}(C_u^{(t-2)}).
\intertext{Note that $Y_{wu}^{(\rho)} = (Y_{uw}^{(\rho)})^*$, and that other than this relation, the entries of $Y$ consist of $O(n^{-1/2})$-size independent noise and lower-order signal. Hence, each of the $O(n)$ terms contributing to this sum is $O(n^{-1})$, and is only of this order if $Y_{uw,ac}^{\rho}$ and $Y_{wu,eh}^{\rho'}$ are dependent, i.e.\ if $a=h$, $c=e$, and $\rho = \rho'$. Thus:}
\mathrm{Ons}_{u,\rho ab}^{(t)} &= d_\rho^{-2} \sum_{cf} \cE_{\rho af}(C_u^{(t-2)}) \sum_{w \neq u} |Y_{uw,ac}^{\rho}|^2 \ddp[\cE_{\rho cb}]{C_{\rho' cf}}\big|_{C_w^{(t-1)}}.
\intertext{Similarly to the derivation of \cite{bm}, suppose that $|Y_{uw,ac}^{\rho}|$ depends sufficiently little on $\ddp[\cE_{\rho cb}]{C_{\rho' cf}}\big|_{C_w^{(t-1)}}$ that we can treat $|Y_{uw,ac}^{\rho}|$ as a constant $|Y_\mathrm{typ}^{\rho}|^2$.
Then:}
\mathrm{Ons}_{u,\rho ab}^{(t)} &= d_\rho^{-2} |Y_\mathrm{typ}^{\rho}|^2 \sum_f \cE_{\rho af}(C_u^{(t-2)}) \sum_{w \neq u} \sum_c \ddp[\cE_{\rho cb}]{C_{\rho' cf}}\big|_{C_w^{(t-1)}}.
\end{align*}

An interlude, understanding derivatives of $\cE$: 
$$ \ddp[I_{\rho ab}]{C_{\rho' cd}} = \int_G {R_{\rho a b}}(g) \bar{R_{\rho' c d}(g)} \exp \sum_{\rho'' a'b'} C_{\rho'' a'b'} \bar{R_{\rho'' a'b'}(g)} \,\dee g. $$
In particular,
\begin{align*}
\ddp[I_\triv]{C_{\rho ab}} &= \int_G \bar{R_{\rho ab}(g)} \exp \sum_{\rho' a'b'} C_{\rho' a'b'} \bar{R_{\rho' a'b'}(g)} \,\dee g \\
&= \bar{I_{\rho ab}}.
\end{align*}
Note the following convenient identity:
\begin{align*}
\sum_c \ddp[I_{\rho cb}]{C_{\rho cf}}
&= \int_G \left( \sum_c {R_{\rho c b}(g)} \bar{R_{\rho c f}(g)} \right) \exp \sum_{\rho' a'b'} C_{\rho' a'b'} \bar{R_{\rho' a'b'}(g)} \,\dee g \\
&= d_\rho \int_G \left( \sum_c \rho(g)_{cb} \rho(g^{-1})_{fc} \right) \exp \sum_{\rho' a'b'} C_{\rho' a'b'} \bar{R_{\rho' a'b'}(g)} \,\dee g \\
&= d_\rho \int_G \rho(g^{-1} g)_{fb} \exp \sum_{\rho' a'b'} C_{\rho' a'b'} \bar{R_{\rho' a'b'}(g)} \,\dee g \\
&= d_\rho \delta_{bf} \int_G \exp \sum_{\rho' a'b'} C_{\rho' a'b'} \bar{R_{\rho' a'b'}(g)} \,\dee g \\
&= d_\rho \delta_{bf} I_\triv(C).
\end{align*}
Recalling that $\cE_{\rho ab}(C) = I_{\rho ab}(C) / I_\triv(C)$, we have
\begin{align*}
\sum_c \ddp[\cE_{\rho cb}]{C_{\rho cf}}
&= \frac{ I_\triv \sum_c \ddp[I_{\rho cb}]{C_{\rho cf}} - \sum_c I_{\rho cb} \bar{I_{\rho cf}}}{I_\triv^2} \\
&= d_\rho \delta_{bf} - \sum_c \cE_{\rho cb}(C) \overline{\cE_{\rho cf}(C)} \\
&= \left( d_\rho I - \cE_\rho(C)^* \cE_\rho(C) \right)_{fb}.
\end{align*}

Thus we obtain the following form for the Onsager correction:
$$ \mathrm{Ons}_{u,\rho}^{(t)} = d_\rho^{-2} |Y_\mathrm{typ}^{\rho}|^2 \cE_{\rho}(C_u^{(t-2)}) M_\rho^{(t)}, \quad M_\rho^{(t)} = \sum_w d_\rho I - \cE_\rho(C_w^{(t-1)})^* \cE_\rho(C_w^{(t-1)}), $$
with each AMP iteration reading as
$$ C_{u,\rho}^{(t)} = d_\rho^{-1} \sum_{w \neq u} Y_{uw}^{\rho} \cE_\rho(C_{w}^{(t-1)}) - \mathrm{Ons}_{u,\rho}^{(t)}. $$

\section{MMSE derivation and state evolution}
\label{sec:mmse-deriv}

The goal of this section is to derive the state evolution equations that govern the behavior of AMP on the Gaussian synchronization model of Section~\ref{sec:gaussian} (in the large $n$ limit). Along the way, we will give an alternative derivation of the algorithm (excluding the Onsager term) which shows that the nonlinear function $\cE$ has an interpretation as an MMSE (minimum mean squared error) estimator. This derivation is similar to \cite{dam} and based on ideas first introduced by \cite{amp-cs}. We do not give a proof that the state evolution equations derived here are correct (i.e.\ that AMP obeys them) but we will argue for their correctness in Section~\ref{sec:se-correct}.

\subsection{MMSE estimator}
\label{sec:mmse}


We begin by defining a `scalar' problem: a simplification of the Gaussian synchronization model where we attempt to recover a single group element from noisy measurements. We will be able to analyze the Gaussian synchronization model by connection to this simpler model. (This is the idea of \emph{single letterization} from information theory.) Suppose there is an unknown group element $g$ drawn uniformly from $G$ (Haar measure) and for each irreducible representation $\rho$ in our list $\mathcal{P}$ we are given a measurement $u_\rho = \mu_\rho {\rho(g)} + \sigma_\rho z_\rho$ (for some constants $\mu_\rho, \sigma_\rho$). Here $z_\rho$ is a $d_\rho \times d_\rho$ non-symmetric matrix of Gaussian entries (real, complex, or block-quaternion, depending on the type of $\rho$) with all entries (or blocks) independent and each entry normalized to have expected squared-norm 1. (Note that $z_\rho$ is the same as an off-diagonal block of the matrix $W_\rho$ from Section~\ref{sec:gaussian}.) For $\rho$ of complex type, we only get a measurement $u_\rho$ for one representation in each conjugate pair, and define $u_{\bar\rho} = \bar{u_\rho}$. The MMSE estimator for ${\rho(g)}$ (minimizing the matrix mean squared error $\EE\|\hat{\rho(g)}-\rho(g)\|_F^2$) is simply the conditional expectation


\begin{align*}
\mathbb{E}\left[{\rho(g)}\Big|\{u_q\}_q\right] &= \int_{h \in G} {\rho(h)} \exp\left(-\sum_{q} \frac{1}{2 \sigma_q^2} \|u_q - \mu_q {q(h)} \|_F^2 \right) \Big/ \int_{h \in G} \exp\left(-\sum_{q} \frac{1}{2 \sigma_q^2} \|u_q - \mu_q {q(h)} \|_F^2 \right) \\
&= \int_{h \in G} {\rho(h)} \exp\left(\sum_{q} \frac{\mu_q}{\sigma_q^2} \langle u_q, q(h) \rangle \right) \Big/ \int_{h \in G} \exp\left(\sum_{q} \frac{\mu_q}{\sigma_q^2} \langle u_q, q(h) \rangle \right) \\
&\equiv \mathcal{F}_\rho\left(\left\{\frac{\mu_q}{\sigma_q^2} u_q\right\}_q\right)
\end{align*}
where
$$\mathcal{F}_\rho(\{w_q\}_q) = \int_{h \in G} {\rho(h)} \exp\left(\sum_{q} \langle w_q, q(h) \rangle \right) \Big/ \int_{h \in G} \exp\left(\sum_{q} \langle w_q, q(h) \rangle \right).$$
Here $q$ ranges over irreducible representations in our list $\mathcal{P}$ (which includes both $q$ and $\bar q$ for representations of complex type). The likelihoods used in the above computation are derived similarly to those in Appendix~\ref{app:gaussian-loglikelihood}. We recognize $\mathcal{F}$ as a rescaling of the function $\mathcal{E}$ from the AMP update step.

\subsection{AMP update step}

Consider the Gaussian observation model $M_\rho = \frac{\lambda_\rho}{n} X_\rho X_\rho^* + \frac{1}{\sqrt{nd_\rho}} W_\rho$ from Section~\ref{sec:gaussian}. Similarly to \cite{dam}, the MMSE-AMP update step (without Onsager term) is
$$U_\rho^{t+1} = M_\rho \,\mathcal{F}_\rho\left(\left\{\frac{\mu_q^t}{(\sigma_q^t)^2} U_q^t\right\}_q\right)$$
where $t$ indicates the timestep and $\mu_\rho^t, \sigma_\rho^t$ will be defined based on state evolution below. Here the AMP state $U_\rho^t$ is $n d_\rho \times d_\rho$ with a $d_\rho \times d_\rho$ block for each vertex. $\mathcal{F}_\rho$ is applied to each of these blocks separately. We will motivate this AMP update step below, but notice its similarity to the MMSE estimator above.

\subsection{State evolution}
\label{sec:se}

The idea of state evolution is that the AMP iterates can be approximately modeled as `signal' plus `noise' \cite{amp-cs}. Namely, we postulate that $U_\rho^t = \mu_\rho^t X_\rho + \sigma_\rho^t Z_\rho$ for some constants $\mu_t, \sigma_t$, where $Z_\rho$ is a $nd_\rho \times d_\rho$ Gaussian noise matrix with each $d_\rho \times d_\rho$ block independently distributed like $z_\rho$ (from the scalar model) with $Z_{\bar\rho} = \bar{Z_\rho}$ for conjugate pairs. Recall $X_\rho$ has blocks $\rho(g_u)$, the ground truth. Note that this sheds light on the AMP update step above: at each iteration we are given $U_q^t$, a noisy copy of the ground truth; the first thing we do is to apply the MMSE estimator entrywise.

We will derive a recurrence for how the parameters $\mu_\rho$ and $\sigma_\rho$ change after one iteration. To do this, we assume that the noise $W_\rho$ is independent from $Z_\rho$ at each timestep. This assumption is far from true; however, it turns out that AMP's Onsager term corrects for this (e.g.\ \cite{bm}). In other words, we derive state evolution by omitting the Onsager term and assuming independent noise at each timestep. Then if we run AMP (with the Onsager term and the same noise at each timestep), it behaves according to state evolution. We now derive state evolution:

\begin{align*}
U_\rho^{t+1} &= M_\rho\, \mathcal{F}_\rho\left(\left\{\frac{\mu_q^t}{(\sigma_q^t)^2}U_q^t\right\}_q\right) \\
&= \left(\frac{\lambda_\rho}{n} X_\rho X_\rho^* + \frac{1}{\sqrt{nd_\rho}} W_\rho\right) \mathcal{F}_\rho\left(\left\{\frac{\mu_q^t}{(\sigma_q^t)^2}\left(\mu_q^t X_q + \sigma_q^t Z_q\right)\right\}_q\right) \\
&= \left(\frac{\lambda_\rho}{n} X_\rho X_\rho^* + \frac{1}{\sqrt{nd_\rho}} W_\rho\right) \mathcal{F}_\rho\left(\left\{\gamma_q^t X_q + \sqrt{\gamma_q^t} Z_q\right\}_q\right) \\
\intertext{where $\gamma_q^t = \left(\frac{\mu_q^t}{\sigma_q^t}\right)^2$}
&= \frac{\lambda_\rho}{n} X_\rho X_\rho^*\, \mathcal{F}_\rho\left(\left\{\gamma_q^t X_q + \sqrt{\gamma_q^t} Z_q\right\}_q\right) + \frac{1}{\sqrt{nd_\rho}} W_\rho\, \mathcal{F}_\rho\left(\left\{\gamma_q^t X_q + \sqrt{\gamma_q^t} Z_q\right\}_q\right).
\end{align*}

First focus on the signal term:
$$\frac{\lambda_\rho}{n} X_\rho X_\rho^*\, \mathcal{F}_\rho\left(\left\{\gamma_q^t X_q + \sqrt{\gamma_q^t} Z_q\right\}_q\right) \approx \lambda_\rho X_\rho\, \EE_{g,z_q}\left[\rho(g)^* \,\mathcal{F}_\rho\left(\left\{\gamma_q^t q(g) + \sqrt{\gamma_q^t} z_q\right\}_q\right)\right]$$
where $g$ is drawn from Haar measure on $G$, and $z_q$ is a non-symmetric Gaussian matrix of the appropriate type (as in Section~\ref{sec:mmse}). Define $A_\rho^t \in \mathbb{C}^{d_\rho \times d_\rho}$ to be the second matrix in the expression above:
$$A_\rho^t \equiv \EE_{g,z_q}\left[\rho(g)^* \,\mathcal{F}_\rho\left(\left\{\gamma_q^t q(g) + \sqrt{\gamma_q^t} z_q\right\}_q\right)\right].$$
We will see shortly that $A_\rho^t$ is a multiple $a_\rho^t \in \mathbb{R}$ of the identity and so we can now write the signal term as $\lambda_\rho a_\rho^t X_\rho$. Therefore our new signal parameter is $\mu_\rho^{t+1} = \lambda_\rho a_\rho^t$.

We take a short detour to state some properties of $A_\rho^t$, which we prove in Appendix~\ref{app:A}.
\begin{lemma}
\label{lemma:A}
$A_\rho^t$ is a real multiple of the identity: $A_\rho^t = a_\rho^t I_{d_\rho}$ for some $a_\rho^t \in \mathbb{R}$. Furthermore, we have the following equivalent formulas for $a_\rho^t$:
\begin{enumerate}[(i)]
\item $\EE_{g,z_q}\left[\rho(g)^* \,\mathcal{F}_\rho\left(\left\{\gamma_q^t q(g) + \sqrt{\gamma_q^t} z_q\right\}_q\right)\right]$
\item $\EE_{g,z_q}\left[\mathcal{F}_\rho\left(\cdots\right)^* \,\mathcal{F}_\rho\left(\cdots\right)\right]$
\item $\EE_{z_q}\left[\mathcal{F}_\rho\left(\left\{\gamma_q^t I_{d_q} + \sqrt{\gamma_q^t} z_q\right\}_q\right)\right]$
\item $\EE_{z_q}\left[\mathcal{F}_\rho\left(\cdots\right)^* \,\mathcal{F}_\rho\left(\cdots\right)\right]$
\end{enumerate}
where $\cdots$ denotes the argument to $\mathcal{F}_\rho$ from the previous line.
\end{lemma}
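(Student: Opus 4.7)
The plan is to reduce Lemma~\ref{lemma:A} to three ingredients: left/right equivariance of the map $\mathcal{F}_\rho$, Schur's lemma over $\mathbb{C}$, and the MMSE orthogonality principle. The main bookkeeping subtlety is to keep every change of variables consistent with the conjugate pairing $z_{\bar q} = \overline{z_q}$ for complex-type representations, but this is handled uniformly by the fact that unitary invariance of the Gaussian survives complex conjugation.

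First I would establish the intertwining identities
\[
\mathcal{F}_\rho(\{q(g_0) w_q\}_q) = \rho(g_0)\, \mathcal{F}_\rho(\{w_q\}_q), \qquad \mathcal{F}_\rho(\{w_q\, q(g_0)\}_q) = \mathcal{F}_\rho(\{w_q\}_q)\, \rho(g_0),
\]
valid for any $g_0 \in G$. Both follow by substituting $h \to g_0 h$ or $h \to h g_0$ inside the defining integral $\int_G \rho(h)\, \exp(\sum_q \langle w_q, q(h)\rangle)\,\mathrm{d}h$ and using Haar invariance together with $q(h)^* = q(h^{-1})$; the partition function in the denominator of $\mathcal{F}_\rho$ absorbs the relabeling identically, and only the $\rho(h)$ factor in the numerator produces the extra representation on the outside.

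Next, to prove $A_\rho^t = a_\rho^t I_{d_\rho}$ I would substitute $g \to gh$ (Haar-invariant) and $z_q \to z_q\, q(h)^{-1}$ (distributionally invariant, since right-multiplication by the unitary $q(h)$ preserves i.i.d.\ complex Gaussian laws) simultaneously in formula (i). The argument of $\mathcal{F}_\rho$ becomes $(\gamma_q^t q(g) + \sqrt{\gamma_q^t}\, z_q)\, q(h)$; right-equivariance peels off $\rho(h)$ on the right, while $\rho(gh)^* = \rho(h)^* \rho(g)^*$ produces a $\rho(h)^*$ on the left. The resulting identity $A_\rho^t = \rho(h)^* A_\rho^t \rho(h)$ shows $A_\rho^t$ commutes with every $\rho(h)$, so Schur's lemma over $\mathbb{C}$ (applicable uniformly across real, complex, and quaternionic type) gives $A_\rho^t = a_\rho^t I_{d_\rho}$ for some $a_\rho^t \in \mathbb{C}$. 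The orthogonality principle then forces (i) = (ii): Section~\ref{sec:mmse} identifies $\mathcal{F}_\rho(\{\gamma_q^t q(g) + \sqrt{\gamma_q^t}\, z_q\}_q)$ as the conditional expectation $\EE[\rho(g) \mid \{u_q\}_q]$, so the tower property gives $\EE[\rho(g)^* \mathcal{F}_\rho] = \EE[\mathcal{F}_\rho^* \mathcal{F}_\rho]$; the right-hand side is Hermitian positive-semidefinite, which pins $a_\rho^t \in \mathbb{R}_{\ge 0}$ and simultaneously settles the real/complex/quaternionic distinction.

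Finally, to identify (iii) and (iv), I would substitute $z_q \to z_q\, q(g)^{-1}$ inside the $z_q$-expectation conditional on $g$; this is distribution-preserving, and right-equivariance converts $\gamma_q^t q(g) + \sqrt{\gamma_q^t}\, z_q$ into $(\gamma_q^t I_{d_q} + \sqrt{\gamma_q^t}\, z_q)\, q(g)$, producing an extra $\rho(g)$ on the right of $\mathcal{F}_\rho$. In (i) this combines with $\rho(g)^*$ on the left to yield $\rho(g)^* M\, \rho(g)$, where $M := \mathcal{F}_\rho(\{\gamma_q^t I_{d_q} + \sqrt{\gamma_q^t}\, z_q\}_q)$ no longer depends on $g$; Haar-averaging then gives the standard identity $\int_G \rho(g)^* M\, \rho(g)\,\mathrm{d}g = \frac{1}{d_\rho}\mathrm{Tr}(M)\,I$. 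Independently, the distribution of $z_q$ is invariant under $z_q \to q(h)^{-1} z_q\, q(h)$ for every $h \in G$, which combined with both equivariances shows $\EE_{z_q}[M] = c\, I$ for some scalar $c$; matching traces gives $c = a_\rho^t$, so (iii) follows. The parallel argument starting from (ii), using that after the same substitution $\mathcal{F}_\rho^* \mathcal{F}_\rho$ is sandwiched between $\rho(g)^*$ and $\rho(g)$, yields (iv). The hardest step to get right is arguably the simultaneous substitution in the commuting-with-$\rho(h)$ proof, where one must verify that the complex-type companion substitutions for $\bar q$ restore the constraint $z_{\bar q} = \overline{z_q}$; everything else is direct once the two equivariances are in hand.
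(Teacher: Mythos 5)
Your proposal is correct and follows essentially the same route as the paper's proof in Appendix~\ref{app:A}: the left/right equivariance identities for $\mathcal{F}_\rho$ combined with the unitary invariance of the Gaussian noise (the paper's Lemma~\ref{lemma:sym}), Schur's lemma to get a multiple of the identity, the conditional-expectation (Nishimori) interpretation for (i)$=$(ii), and positivity of the trace of (ii) for realness. The only differences are organizational — the paper routes everything through the function $f_\rho(g)=\EE_z\mathcal{F}_\rho(\cdots)$ and the identity $f_\rho(g)=\rho(g)f_\rho(e)$, which makes (i)$=$(iii) immediate without your extra Haar-averaging and trace-matching step, but both arguments are sound.
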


Returning to state evolution, we now focus on the noise term:
$$\frac{1}{\sqrt{nd_\rho}} W_\rho\, \mathcal{F}_\rho\left(\left\{\gamma_q^t X_q + \sqrt{\gamma_q^t} Z_q\right\}_q\right).$$
Each entry of this $n d_\rho \times d_\rho$ matrix is Gaussian. The variance (expected squared-norm) of entry $(i,j)$ is (approximately)
\begin{align*}
\frac{1}{nd_\rho} \sum_{k=1}^{n d_\rho} \left|\mathcal{F}_\rho\left(\left\{\gamma_q^t X_q + \sqrt{\gamma_q^t} Z_q\right\}_q\right)_{k,j}\right|^2
&\approx \frac{1}{d_\rho} \mathbb{E}_{g,z_q} \sum_{k=1}^{d_\rho} \left|\mathcal{F}_\rho\left(\left\{\gamma_q^t q(g) + \sqrt{\gamma_q^t} z_q\right\}_q\right)_{k,j}\right|^2 \\
&= \frac{1}{d_\rho} \EE\left[\mathcal{F}_\rho(\cdots)^* \mathcal{F}_\rho(\cdots)\right]_{jj} \\
&= \frac{1}{d_\rho} (A_\rho^t)_{jj} \\
&= \frac{1}{d_\rho} a_\rho^t.
\end{align*}
We therefore have the new noise parameter $(\sigma_\rho^{t+1})^2 = \frac{a_\rho^t}{d_\rho}$.

To summarize, we now have the state evolution recurrence $\mu_\rho^{t+1} = \lambda_\rho a_\rho^t$ and $(\sigma_\rho^{t+1})^2 = \frac{a_\rho^t}{d_\rho}$.

\subsection{Simplified AMP update step}

Note that the state evolution recurrence implies the relation
$$\frac{\mu_\rho^{t+1}}{(\sigma_\rho^{t+1})^2} = d_\rho \lambda_\rho.$$
Provided our initial values of $\mu_\rho, \sigma_\rho$ satisfy this relation (which can always be arranged by scaling the initial $U_\rho$ appropriately), our AMP update step (without Onsager term) becomes
$$U_\rho^{t+1} = M_\rho \,\mathcal{F}_\rho\left(\left\{d_\rho \lambda_\rho U_q^t\right\}_q\right).$$
This is convenient because we can implement AMP without keeping track of the state evolution parameters $\mu_\rho^t, \sigma_\rho^t$. Also note that this variant of AMP matches the original derivation after the rescaling $C_\rho^t = \sqrt{d_\rho} \lambda_\rho U_\rho^t$ (and excluding the Onsager term).

\subsection{Reduction to single parameter (per frequency)}

We will rewrite the state evolution recurrence in terms of a single parameter per frequency. This parameter will be $\gamma_\rho^t$, which was introduced earlier: $\gamma_\rho^t = \left(\frac{\mu_\rho^t}{\sigma_\rho^t}\right)^2$. Recall the state evolution recurrence $\mu_\rho^{t+1} = \lambda_\rho a_\rho^t$ and $(\sigma_\rho^{t+1})^2 = \frac{a_\rho^t}{d_\rho}$. We therefore have the update step
$$\gamma_\rho^{t+1} = \left(\frac{\mu_\rho^{t+1}}{\sigma_\rho^{t+1}}\right)^2 = \frac{(\lambda_\rho a_\rho^t)^2}{a_\rho^t / d_\rho} = d_\rho \lambda_\rho^2 a_\rho^t.$$
Using part (iii) of Lemma~\ref{lemma:A} we can write this as:
\begin{equation}
\label{eq:se}
\gamma_\rho^{t+1} = \lambda_\rho^2 \,\EE_{z_q}\mathrm{Tr}\,\mathcal{F}_\rho\left(\left\{\gamma_q^t I_{d_q} + \sqrt{\gamma_q^t} z_q\right\}_q\right).
\end{equation}


\noindent This is the final form of our state evolution recurrence. The relation between $\mu_\rho, \sigma_\rho, \gamma_\rho$ can be summarized as $\gamma_\rho = d_\rho \lambda_\rho \mu_\rho = d_\rho^2 \lambda_\rho^2 \sigma_\rho^2$.

We expect that the state evolution recurrence (\ref{eq:se}) exactly governs the behavior of AMP in the large $n$ limit. Although the derivation above was heuristic, we discuss its correctness in Section~\ref{sec:se-correct}. There is a caveat regarding how it should be initialized (see Section~\ref{sec:se-correct}) but in practice we can imagine the initial $\gamma$ value is a small random vector. (Note that the initialization $\gamma = \vec 0$ is problematic because state evolution will never leave zero.) We expect that state evolution converges to some fixed point of the recurrence. Some complications arise if there are multiple fixed points (see Section~\ref{sec:gaps}) but we expect there to be a unique fixed point that is reached from any small initialization. This fixed point $\gamma^*$ describes the output of AMP in the sense that (following the postulate of state evolution) the final AMP iterate is approximately distributed as $U_\rho \approx \mu_\rho^* X_\rho + \sigma_\rho^* Z_\rho$, which in terms of $\gamma^*$ is (up to scaling) $U_\rho \approx \gamma_\rho^* X_\rho + \sqrt{\gamma^*_\rho} Z_\rho$. (See \cite{bm} for the precise sense in which we expect this to be true.) Note that one can use this to translate a $\gamma^*$ value into any measure of performance, such as MSE. This gives an exact asymptotic characterization of the performance of AMP for any set of $\lambda_\rho$ values. The most prominent feature of AMP's performance is the threshold at $\lambda = 1$, which we derive in the next section.

One can check that our state evolution recurrence matches the Bayes-optimal cavity and replica predictions of \cite{sdp-phase} for $\mathbb{Z}/2$ and $U(1)$ with one frequency. Indeed, we expect AMP to be statistically optimal in these settings (and many others too; see Section~\ref{sec:gaps}), and this has been proven rigorously for $\mathbb{Z}/2$ \cite{dam}.

\subsection{\texorpdfstring{Threshold at $\lambda = 1$}{Threshold at lambda=1}}
\label{sec:lambda1}

In this section we use the state evolution occurrence to derive the threshold above which AMP achieves nontrivial recovery. In particular, if $\lambda_\rho < 1$ for all frequencies $\rho$ then the AMP fixed point $\gamma^*$ is equal to the zero vector and so AMP gives trivial performance (random guessing) in the large $n$ limit. On the other hand, if $\lambda_\rho > 1$ for at least one frequency $\rho$ then $\gamma^*$ is nonzero and AMP achieves nontrivial recovery.

The zero vector is always a fixed point of state evolution. Whether or not AMP achieves nontrivial performance depends on whether the zero vector is a stable or unstable fixed point. Therefore we consider the regime where $\gamma_\rho$ is small for all $\rho$. When the input $\{w_q\}_q$ to $\mathcal{F}_\rho$ is small, we can approximate $\mathcal{F}_\rho$ by its linearization.
$$\mathcal{F}_\rho\left(\left\{w_q\right\}_q\right) \approx \int_h \rho(h)\left[1 + \sum_q \langle w_q, q(h) \rangle \right] = \int_h \rho(h)\sum_q \langle w_q, q(h) \rangle$$
and so
\begin{align*}
\mathcal{F}_\rho\left(\left\{w_q\right\}_q\right)_{ab} &\approx \int_h \rho(h)_{ab} \sum_{qcd} w_{qcd} \bar{q(h)_{cd}} \\
&= \sum_{qcd} w_{qcd} \int_h \rho(h)_{ab} \bar{q(h)_{cd}} \\
&= \sum_{qcd} w_{qcd} \frac{1}{d_\rho} \delta_{\rho a b, q c d} \\
&= \frac{w_{\rho a b}}{d_\rho}
\end{align*}
which means $\mathcal{F}_\rho\left(\left\{w_q\right\}_q\right) \approx \frac{w_\rho}{d_\rho}$. Now the state evolution update step becomes
\begin{align*}
\gamma_\rho^{t+1} &= \lambda_\rho^2 \,\EE_{z_q}\mathrm{Tr}\,\mathcal{F}_\rho\left(\left\{\gamma_q^t I_{d_q} + \sqrt{\gamma_q^t} z_q\right\}_q\right) \\
&\approx \lambda_\rho^2 \,\EE_{z_q}\mathrm{Tr}\,\frac{1}{d_\rho}\left(\gamma_\rho^t I_{d_\rho} + \sqrt{\gamma_\rho^t} z_q\right) \\
&= \lambda_\rho^2 \gamma_\rho^t.
\end{align*}

\noindent This means that when $\gamma$ is small (but nonzero), $\gamma_\rho$ shrinks towards zero if $\lambda_\rho < 1$ and grows in magnitude if $\lambda_\rho > 1$. We conclude the threshold at $\lambda = 1$.

\section{Correctness of state evolution?}
\label{sec:se-correct}

In this section we justify the heuristic derivation of state evolution in the previous section and argue for its correctness. We first discuss prior work that provides a rigorous foundation for the methods we used, in related settings. We then show numerically that our AMP algorithm obeys the state evolution equations.

\subsection{Rigorous work on state evolution}

State evolution was introduced along with AMP by \cite{amp-cs}. It was later proven rigorously that AMP obeys state evolution in the large $n$ limit (in a particular formal sense) for certain forms of the AMP iteration \cite{bm,jm}. In particular, $\mathbb{Z}/2$ synchronization with Gaussian noise (a special case of our model) falls into this framework and thus admits a rigorous analysis \cite{dam}. Although the proofs of \cite{bm,jm} only consider the case of real-valued AMP, it has been stated \cite{camp} that the proof extends to the complex-valued case. This covers our synchronization model over $U(1)$ with one frequency. In order to cover our general formulation of AMP over any group with any number of frequencies, one needs to replace the complex numbers by a different real algebra (namely a product of matrix algebras). We expect that this generalization should follow from the existing methods.

There is, however, an additional caveat involving the initialization of state evolution. In practice, we initialize AMP to small random values. Recall that we only need to recover the group elements up to a global right-multiplication and so there exists a favorable global right-multiplication so that our random initialization has some correlation with the truth. However, this correlation is $o(1)$ and corresponds to $\gamma = \vec 0$ in the large $n$ limit. This means that technically, the formal proof of state evolution (say for $\mathbb{Z}/2$) tells us that for any fixed $t$, AMP achieves $\gamma = \vec 0$ after $t$ iterations in the large $n$ limit. Instead we would like to show that after $\omega(1)$ iterations we achieve a nonzero $\gamma$. It appears that proving this would require a non-asymptotic analysis of AMP, such as \cite{amp-finite}. It may appear that this initialization issue can be fixed by initializing AMP with a spectral method, which achieves $\Omega(1)$ correlation with the truth; however this does not appear to easily work due to subtle issue about correlation between the noise and iterates. In practice, the initialization issue is actually not an issue at all: with a small random initialization, AMP consistently escapes from the trivial fixed point (provided some $\lambda$ exceeds 1). One way to explain this is that when the AMP messages are small, the nonlinear function $\mathcal{F}$ is essentially the identity (see Section~\ref{sec:lambda1}) and so AMP is essentially just the power method; this roughly means that AMP automatically initializes itself to the output of the spectral method.

\subsection{Experiments on state evolution}

We now present experimental evidence that AMP obeys the state evolution equations. In Figure~\ref{fig:se-correct} we show two experiments, one with $U(1)$ and one with $SO(3)$. In both cases we see that the performance of AMP closely matches the state evolution prediction. We see some discrepancy near the $\lambda = 1$ threshold, which can be attributed to the fact that here we are running AMP with finite $n$ whereas state evolution describes the $n \to \infty$ behavior.

\begin{figure}[!ht]
    \centering
    \begin{subfigure}[t]{0.47\textwidth}
        \centering
        \includegraphics[width=\linewidth]{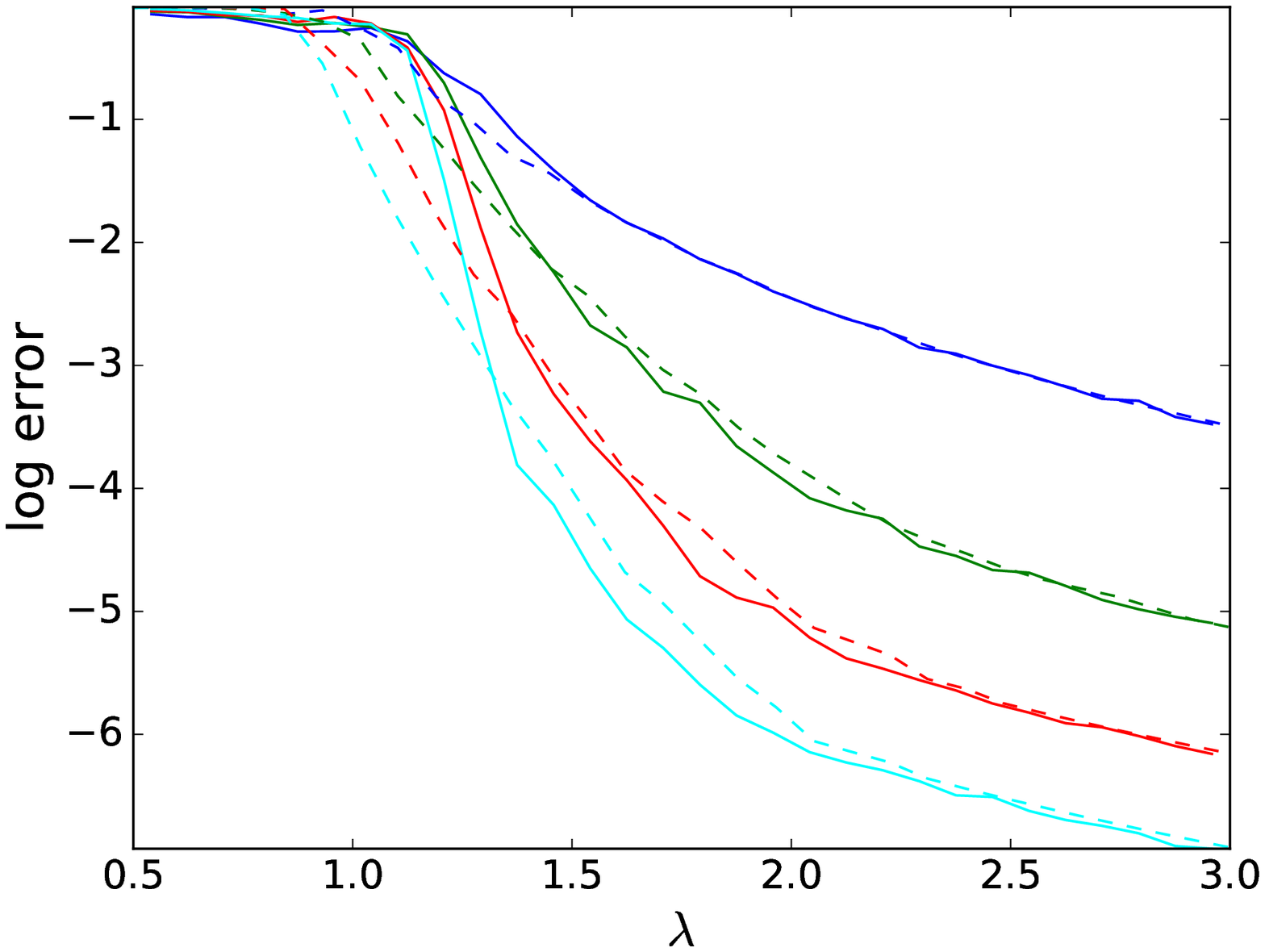}
    \end{subfigure}
    \hfill
    \begin{subfigure}[t]{0.47\textwidth}
        \centering
        \includegraphics[width=\linewidth]{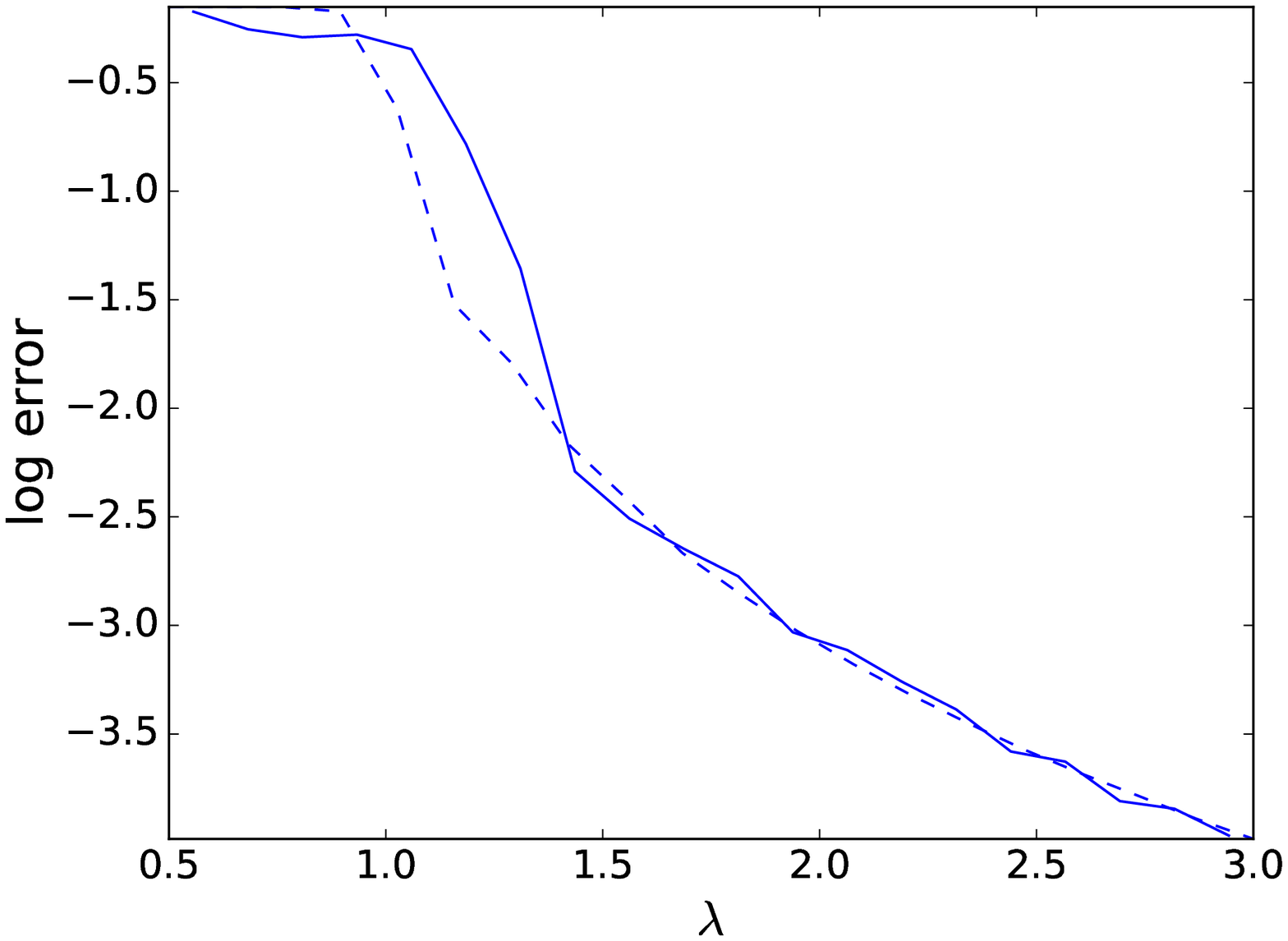}
    \end{subfigure}
    \caption{AMP compared to the state evolution equations experimentally. {\bf Left}: $U(1)$ with $K$ frequencies, for $K = 1,2,3,4$ (from top to bottom) with $n=100$. The solid line is AMP and the dotted line is the state evolution prediction. The horizontal axis is the signal-to-noise ratio $\lambda$, which we take to be equal on all frequencies. The vertical axis is the natural logarithm of error, which is defined as $\mathrm{error} = 1-|\langle x, \hat x \rangle/n| \in [0,1]$ where $x \in U(1)^n$ is the truth and $\hat x \in U(1)^n$ is the (rounded) output of AMP. In particular, a log error value of zero (top of the figure) indicates trivial recovery (random guessing), and lower values are better. {\bf Right}: $SO(3)$ with one frequency, with $n = 50$. Now error is measured as $\mathrm{error} = 1-\frac{1}{\sqrt{ 3}n}\| X^\top \hat X \|_F \in [0,1]$ where $X,\hat X$ are $3n \times n$ matrices whose $3 \times 3$ blocks encode elements of $SO(3)$ via the standard representation (3D rotation matrices).}
    \label{fig:se-correct}
\end{figure}

\section{Statistical-to-computational gaps}
\label{sec:gaps}

In various settings it has been shown, using standard but non-rigorous methods from statistical physics, that the analysis of AMP and state evolution yields a complete picture of the various ``phase transitions'' that occur in a computational problem (e.g.\ \cite{phase-sparse,mmse-low-rank}). In some settings, certain features of these predictions have been confirmed rigorously (e.g.\ \cite{mi-rank-one,mi-rank-one-proof}). In this section we will use these methods to give non-rigorous predictions about statistical-to-computational gaps in the Gaussian synchronization model.

In Section~\ref{sec:lambda1} we have seen that (in the large $n$ limit) AMP achieves nontrivial recovery if and only if $\lambda > 1$ on at least one frequency. In this section, we will see that it is sometimes statistically possible to succeed below this threshold, although no known efficient algorithm achieves this. A rigorous analysis of an inefficient estimator has indeed confirmed that the $\lambda = 1$ threshold can be beaten in some cases \cite{pwbm-contiguity}; the non-rigorous computations in this section give sharp predictions for exactly when this is possible.

\subsection{Free energy}

Recall the parameter $\gamma = \{\gamma_\rho\}_\rho$ from the state evolution recurrence (\ref{eq:se}); $\gamma$ captures the amount of information that AMP's current state has about each frequency, with $\gamma_\rho = 0$ indicating no information and $\gamma_\rho \to \infty$ indicating complete knowledge.

An important quantity is the \emph{Bethe free energy} per variable (also called the \emph{replica symmetric potential function}) of a state $\gamma$, which for the Gaussian synchronization model is given (up to constants) by
$$f(\gamma) = -\frac{1}{4} \sum_\rho d_\rho^2 \lambda_\rho^2 + \frac{1}{2} \sum_\rho d_\rho \gamma_\rho + \frac{1}{4} \sum_\rho \frac{\gamma_\rho^2}{\lambda_\rho^2} - \EE_z \log \EE_g \exp\left(\sum_\rho \langle \rho(g), \gamma_\rho I_{d_\rho} + \sqrt{\gamma_\rho} z_\rho \rangle\right)$$
where $z_\rho$ is a $d_\rho \times d_\rho$ matrix of i.i.d.\ standard Gaussians (of the appropriate type: real, complex, or quaternionic, depending on $\rho$), and $g$ is drawn from Haar measure on the group. We do not include the derivation of this expression, but it can be computed from belief propagation (as in \cite{mmse-low-rank}) or from the replica calculation (as in \cite{sdp-phase}).

Roughly speaking, the interpretation of the Bethe free energy is that it is the objective value that AMP is trying to minimize. AMP can be thought of as starting from the origin $\gamma = 0$ and performing na\"ive gradient descent in the free energy landscape until it reaches a local minimum; the value of $\gamma$ at this minimum describes the final state of AMP. (It can be shown that the fixed points of the state evolution recurrence (\ref{eq:se}) are precisely the stationary points of the Bethe free energy.) As is standard for these types of problems, we conjecture that AMP is optimal among all polynomial-time algorithms. However, with no restriction on efficiency, the information-theoretically optimal estimator is given by the global minimum of the free energy. (This has been shown rigorously for the related problem of rank-one matrix estimation \cite{mi-rank-one-proof}.) The intuition here is that the optimal estimator should use exhaustive search to enumerate all fixed points of AMP and return the one of lowest Bethe free energy. Note that just because we can compute the $\gamma$ value that minimizes the Bethe free energy it does not mean we can achieve this $\gamma$ with an efficient algorithm; $\gamma$ represents correlation between the AMP iterates and the ground truth, and since the truth is unknown it is hard to find iterates that have a prescribed $\gamma$.

\subsection{Examples}

We now examine the Bethe free energy landscapes of some specific synchronization problems at various values of $\lambda$, and discuss the implications. Our primary examples will be $U(1)$ and $\mathbb{Z}/L$ with various numbers of frequencies, as discussed in Section~\ref{sec:rep-examples}. Recall that references to $U(1)$ or $\ZZ/L$ ``with $K$ frequencies'' means that observations are band-limited to the Fourier modes $e^{i k \theta}$ with $|k| \leq K$.

Our first example is $U(1)$ with a single frequency, shown in Figure~\ref{fig:f-u1}. Here we see that the problem transitions from (statistically) `impossible' to `easy' (AMP achieves nontrivial recovery) at $\lambda = 1$, with no (computationally) `hard' regime. In particular, AMP is statistically optimal for every value of $\lambda$.

\begin{figure}[!ht]
    \centering
    \begin{subfigure}[t]{0.47\textwidth}
        \centering
        \includegraphics[width=\linewidth]{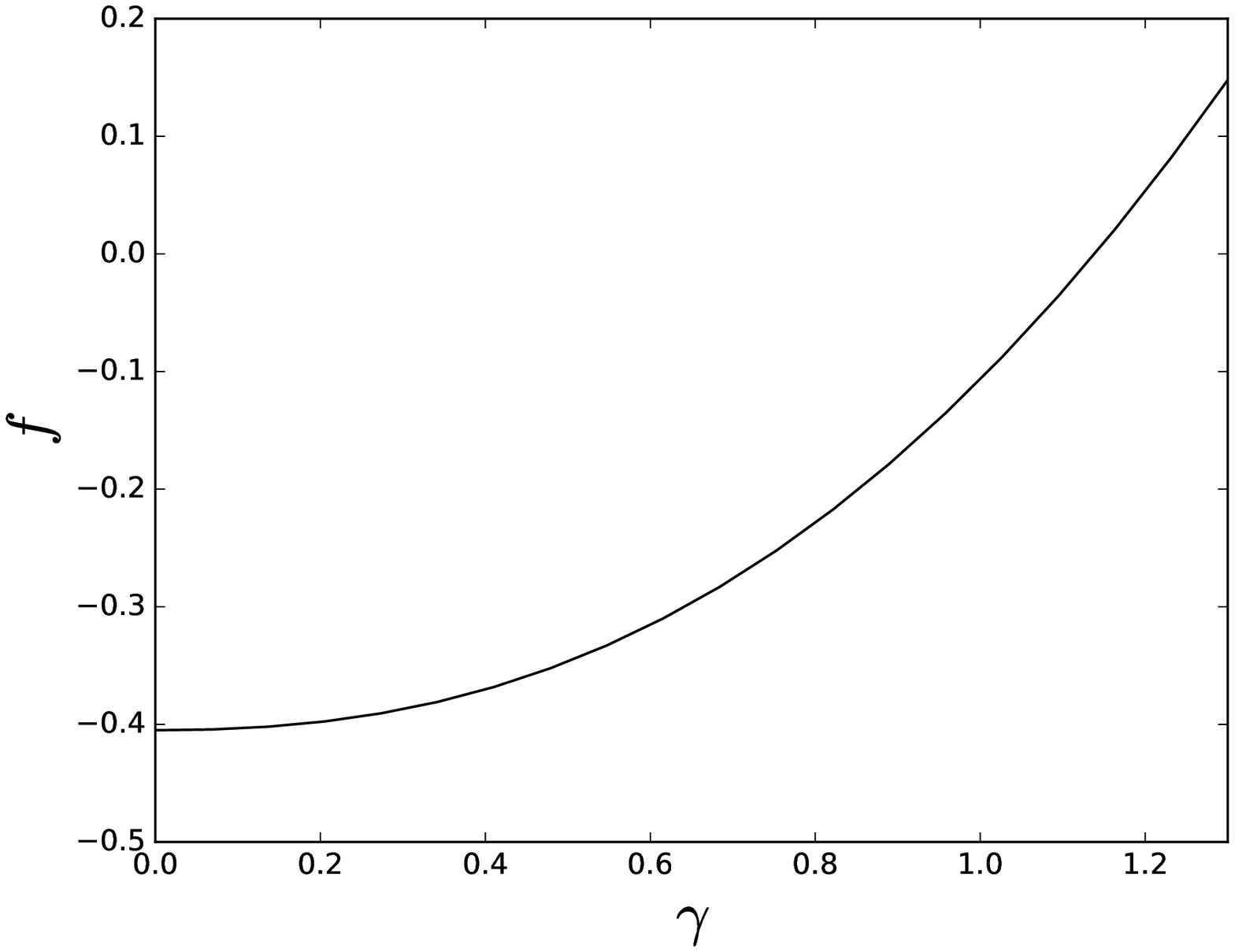}
    \end{subfigure}
    \hfill
    \begin{subfigure}[t]{0.47\textwidth}
        \centering
        \includegraphics[width=\linewidth]{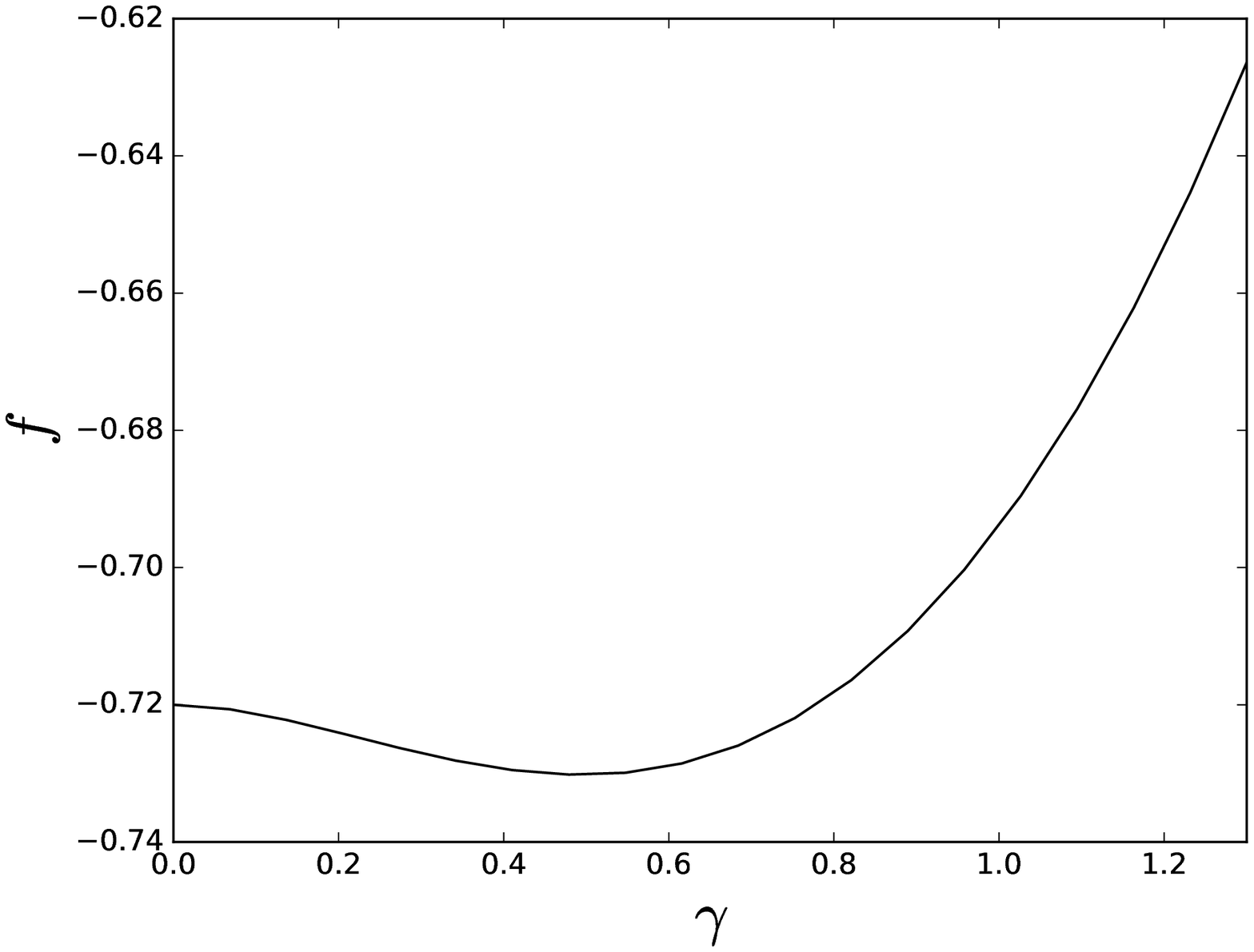}
    \end{subfigure}
    \caption{Free energy landscape for $U(1)$ with 1 frequency. {\bf Left}: $\lambda < 1$. The global minimum of free energy occurs at $\gamma = 0$, indicating that AMP or any other estimator achieves zero correlation with the truth. {\bf Right}: $\lambda > 1$. Now the global minimum occurs at nonzero $\gamma$, and this is achieves by AMP. Therefore AMP achieves the statistically optimal MSE (mean squared error). This MSE departs continuously from zero at the $\lambda = 1$ threshold.}
    \label{fig:f-u1}
\end{figure}

Our next example is a single-frequency problem that exhibits a computational gap (a `hard' phase). In Figure~\ref{fig:a4} we take the alternating group $A_4$ with its irreducible 3-dimensional representation as the rotational symmetries of a tetrahedron. When $\lambda > 1$, AMP achieves statistically optimal performance but when $\lambda$ is below 1 but sufficiently large, AMP gives trivial performance while the statistically optimal estimator gives nontrivial performance. This means we have a computational gap, i.e. there are values of $\lambda$ below the AMP threshold ($\lambda = 1$) where nontrivial recovery is statistically possible.

\begin{figure}[!ht]
    \centering
    \begin{subfigure}[t]{0.47\textwidth}
        \centering
        \includegraphics[width=\linewidth]{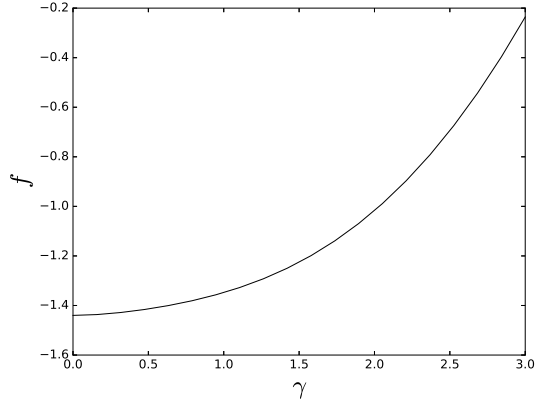}
        \captionof*{figure}{{\bf (a)} $\lambda = 0.8$, impossible}
    \end{subfigure}
    \hfill
    \begin{subfigure}[t]{0.47\textwidth}
        \centering
        \includegraphics[width=\linewidth]{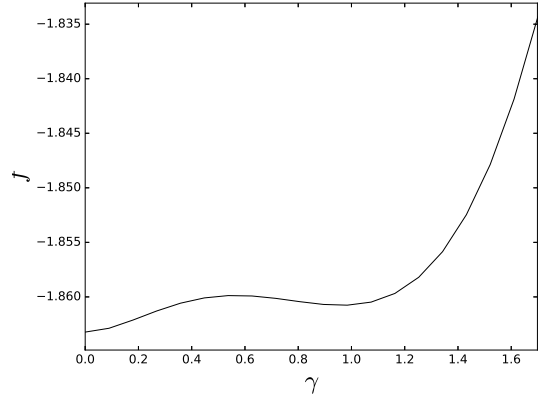}
        \captionof*{figure}{{\bf (b)} $\lambda = 0.91$, impossible}
    \end{subfigure}
    \begin{subfigure}[t]{0.47\textwidth}
        \centering
        \includegraphics[width=\linewidth]{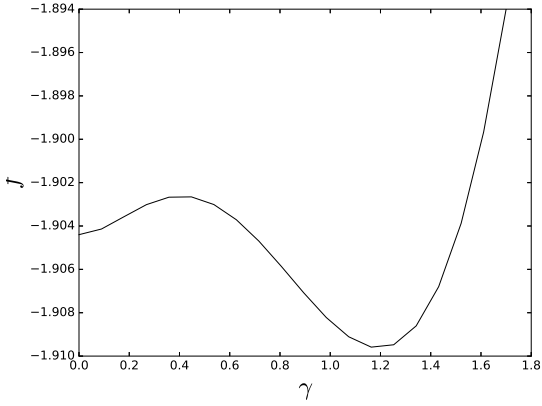}
        \captionof*{figure}{{\bf (c)} $\lambda = 0.92$, hard}
    \end{subfigure}
    \hfill
    \begin{subfigure}[t]{0.47\textwidth}
        \centering
        \includegraphics[width=\linewidth]{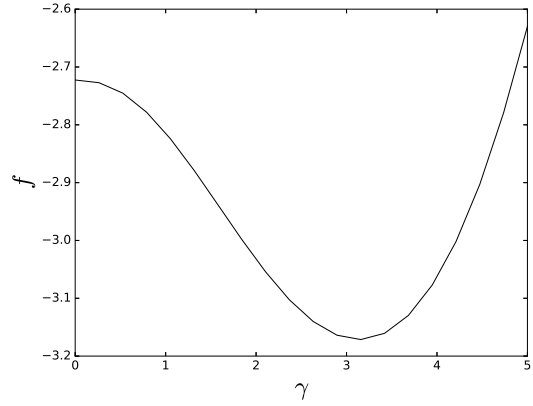}
        \captionof*{figure}{{\bf (d)} $\lambda = 1.1$, easy}
    \end{subfigure}
    \caption{Free energy landscape for $A_4$ with 1 frequency: the standard 3-dimensional representation (rigid motions of a tetrahedron). {\bf (a)} $\lambda = 0.8$. The global minimizer is $\gamma = 0$ so no estimator achieves nontrivial recovery. {\bf (b)} $\lambda = 9.1$. A new local minimum in the free energy has appeared, but the global minimum is still at $\gamma = 0$ and so nontrivial recovery remains impossible. {\bf (c)} $\lambda = 9.2$. AMP is stuck at $\gamma = 0$ but the (inefficient) statistically optimal estimator achieves a nontrivial $\gamma$ (the global minimum). AMP is not statistically optimal. This computational gap appears at $\lambda \approx 0.913$, at which point the global minimizer transitions discontinuously from $\gamma = 0$ to some positive value. {\bf (d)} $\lambda = 1.1$. AMP achieves optimal recovery. The AMP $\gamma$ value transitions discontinuously from zero to optimal at $\lambda = 1$.}
    \label{fig:a4}
\end{figure}

Next we move on to some 2-frequency problems, where $\gamma$ is now a 2-dimensional vector. In Figure~\ref{fig:2-freq} we see an example with no computational gap, and an example with a computational gap. Note that the free energy landscape at the AMP threshold $\lambda = (1,\ldots,1)$ reveals whether or not a computational gap exists: there is a gap if and only if the global minimum of free energy does not occur at the origin.

\begin{figure}[!ht]
    \centering
    \begin{subfigure}[t]{0.47\textwidth}
        \centering
        \includegraphics[width=\linewidth]{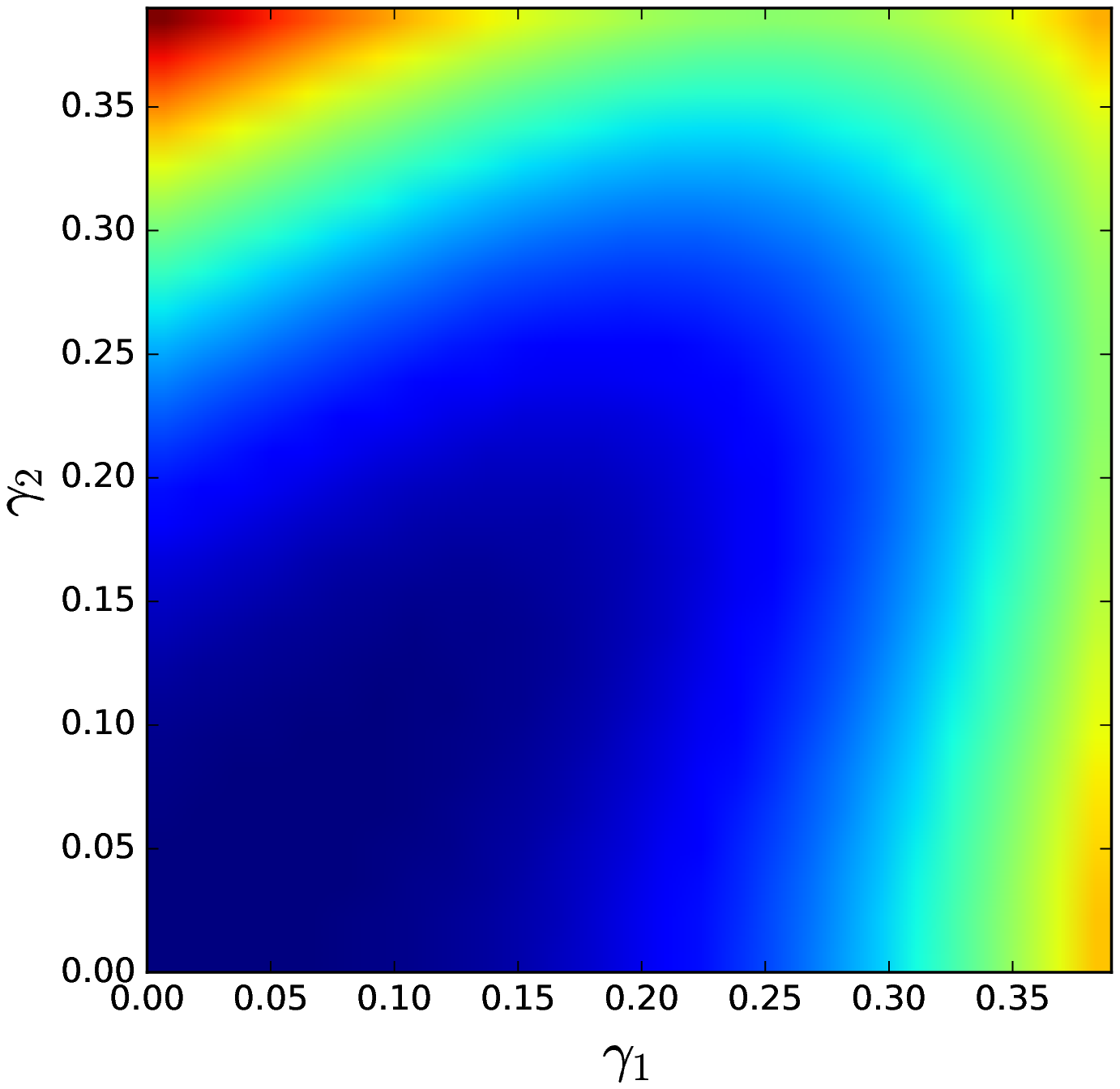}
    \end{subfigure}
    \hfill
    \begin{subfigure}[t]{0.47\textwidth}
        \centering
        \includegraphics[width=\linewidth]{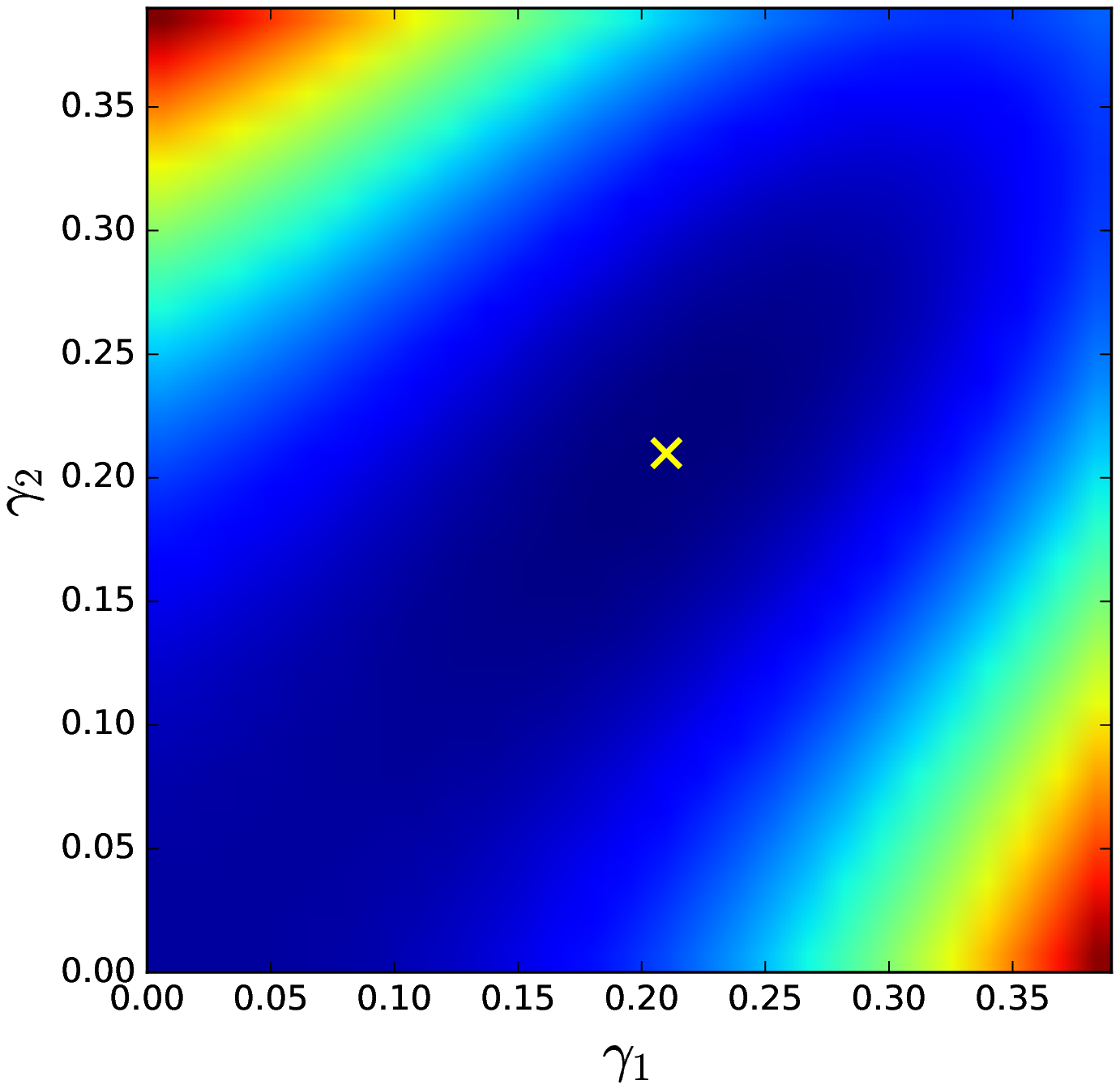}
    \end{subfigure}
    \caption{Free energy landscape for 2-frequency problems at the critical value $\lambda = (1,1)$. Darker colors indicate lower free energy. {\bf Left}: $\mathbb{Z}/6$ with 2 frequencies. Here the origin is the global minimizer of free energy and so there is no computational gap, i.e.\ nontrivial recovery is statistically impossible when both $\lambda_1$ and $\lambda_2$ are below 1. {\bf Right}: $\mathbb{Z}/5$ with 2 frequencies. Here the global minimizer (marked with an X) does not lie at the origin and so there is a computational gap, i.e. there is a regime where nontrivial recovery is statistically possible yet AMP fails.}
    \label{fig:2-freq}
\end{figure}

We now state some experimental results regarding which synchronization problems have computational gaps. For $U(1)$ with (the first) $K$ frequencies, there is a gap iff $K \ge 3$. For $\mathbb{Z}/L$ with $K$ frequencies, there is a gap for $K \ge 3$ and no gap for $K = 1$; when $K = 2$ there is only a gap for $L = 5$. For $SO(3)$ with $K$ frequencies, there is a gap iff $K \ge 2$.

In \cite{pwbm-contiguity} we gave some rigorous lower bounds for Gaussian synchronization problems, showing for instance that $U(1)$ with one frequency is statistically impossible below $\lambda = 1$. The non-rigorous results above predict further results that we were unable to show rigorously, e.g. $U(1)$ with two frequencies and $\mathbb{Z}/3$ (with one frequency) are statistically impossible below the $\lambda = 1$ threshold.

In the examples above we saw that when every $\lambda$ is below 1, AMP gives trivial performance, and when some $\lambda$ exceeds 1, AMP gives statistically optimal performance. However, the behavior can be more complicated, namely AMP can exhibit nontrivial but sub-optimal performance. In Figure~\ref{fig:subopt} we show such an example: $\mathbb{Z}/25$ with $9$ frequencies.

\begin{figure}[!ht]
    \centering
    \includegraphics[width=0.5\linewidth]{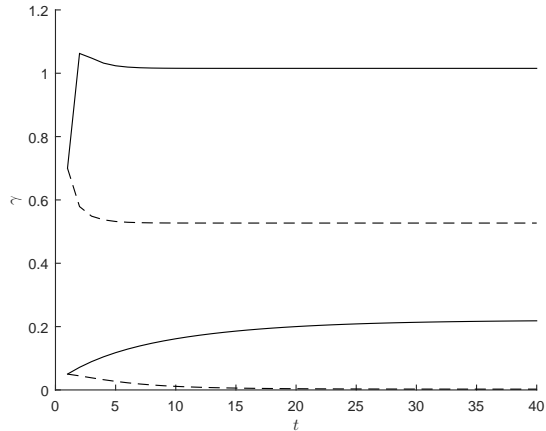}
    \caption{An example where AMP gives nontrivial but sub-optimal performance. Here we take $\mathbb{Z}/25$ with $9$ frequencies. Set $\lambda_k = 0.8$ for $k = 1,\ldots,8$ and $\lambda_9 = 1.1$. Since we cannot visualize the free energy landscape in 9 dimensions, we instead plot the state evolution recurrence as it evolves over time (number of iterations $t$) from two different starting points. The bottom two curves correspond to AMP's performance, where we initialize $\gamma$ to be small: $\gamma = (0.05,0.05)$. The solid line is $\gamma_9$ and the dashed line is $\gamma_1$ (which is representative of $\gamma_2,\ldots,\gamma_8$). The top two curves correspond to a ``warm start'' $\gamma = (0.7,0.7)$. We see that with the warm start, state evolution converges to a different fixed point with larger $\gamma$ values, and thus better correlation with the truth. Furthermore, this fixed point has lower free energy (not shown) than the lower one, indicating that the information-theoretically optimal estimator outperforms AMP.}
    \label{fig:subopt}
\end{figure}

\FloatBarrier

\section*{Acknowledgements}
The authors would like to thank Amit Singer, Roy Lederman, Yutong Chen, Nicholas Boumal, others from Amit Singer's group, and Yash Deshpande, for helpful discussions.

\bibliographystyle{alpha}
\bibliography{main}

\newcommand{\etalchar}[1]{$^{#1}$}
\begin{thebibliography}{PWBM16}

\bibitem[ABBS14]{abbe2014decoding}
Emmanuel Abbe, Afonso~S Bandeira, Annina Bracher, and Amit Singer.
\newblock Decoding binary node labels from censored edge measurements: Phase
  transition and efficient recovery.
\newblock {\em IEEE Transactions on Network Science and Engineering},
  1(1):10--22, 2014.

\bibitem[ABH16]{abh}
Emmanuel Abbe, Afonso~S Bandeira, and Georgina Hall.
\newblock Exact recovery in the stochastic block model.
\newblock {\em IEEE Transactions on Information Theory}, 62(1):471--487, 2016.

\bibitem[ARC06]{agrawal2006range}
Amit Agrawal, Ramesh Raskar, and Rama Chellappa.
\newblock What is the range of surface reconstructions from a gradient field?
\newblock In {\em European Conference on Computer Vision}, pages 578--591.
  Springer, 2006.

\bibitem[AW08]{amini-wainwright}
Arash~A Amini and Martin~J Wainwright.
\newblock High-dimensional analysis of semidefinite relaxations for sparse
  principal components.
\newblock In {\em 2008 IEEE International Symposium on Information Theory},
  pages 2454--2458. IEEE, 2008.

\bibitem[{Ban}15]{afonso-thesis}
Afonso~S {Bandeira}.
\newblock {\em Convex Relaxations for Certain Inverse Problems on Graphs}.
\newblock PhD thesis, Princeton University, June 2015.

\bibitem[BBS14]{bandeira2014tightness}
Afonso~S Bandeira, Nicolas Boumal, and Amit Singer.
\newblock Tightness of the maximum likelihood semidefinite relaxation for
  angular synchronization.
\newblock {\em arXiv:1411.3272}, 2014.

\bibitem[BCS15]{nug}
Afonso~S. {Bandeira}, Yutong {Chen}, and Amit {Singer}.
\newblock Non-unique games over compact groups and orientation estimation in
  cryo-{EM}.
\newblock {\em arXiv:1505.03840}, May 2015.

\bibitem[BCSZ14]{bandeira2014multireference}
Afonso~S Bandeira, Moses Charikar, Amit Singer, and Andy Zhu.
\newblock Multireference alignment using semidefinite programming.
\newblock In {\em Proceedings of the 5th conference on Innovations in
  theoretical computer science}, pages 459--470. ACM, 2014.

\bibitem[BDM{\etalchar{+}}16]{mi-rank-one-proof}
Jean Barbier, Mohamad Dia, Nicolas Macris, Florent Krzakala, Thibault Lesieur,
  and Lenka Zdeborova.
\newblock Mutual information for symmetric rank-one matrix estimation: A proof
  of the replica formula.
\newblock {\em arXiv:1606.04142}, 2016.

\bibitem[BGN11]{nong-eigv1}
Florent Benaych-Georges and Raj~Rao Nadakuditi.
\newblock The eigenvalues and eigenvectors of finite, low rank perturbations of
  large random matrices.
\newblock {\em Advances in Mathematics}, 227(1):494--521, 2011.

\bibitem[Bis99]{bishop-bayesian}
Christopher~M Bishop.
\newblock Bayesian {PCA}.
\newblock {\em Advances in neural information processing systems}, pages
  382--388, 1999.

\bibitem[BM11]{bm}
Mohsen {Bayati} and Andrea {Montanari}.
\newblock The dynamics of message passing on dense graphs, with applications to
  compressed sensing.
\newblock {\em IEEE Transactions on Information Theory}, 57(2):764--785, 2011.

\bibitem[Bou16]{boumal}
Nicolas Boumal.
\newblock Nonconvex phase synchronization.
\newblock {\em arXiv:1601.06114}, 2016.

\bibitem[BR13a]{br-hardness}
Quentin Berthet and Philippe Rigollet.
\newblock Complexity theoretic lower bounds for sparse principal component
  detection.
\newblock In {\em COLT}, pages 1046--1066, 2013.

\bibitem[BR13b]{berthet-rigollet-optimal}
Quentin Berthet and Philippe Rigollet.
\newblock Optimal detection of sparse principal components in high dimension.
\newblock {\em The Annals of Statistics}, 41(4):1780--1815, 2013.

\bibitem[BtD13]{brocker-tomdieck}
Theodor Br{\"o}cker and Tammo tom Dieck.
\newblock {\em Representations of compact Lie groups}, volume~98.
\newblock Springer Science \& Business Media, 2013.

\bibitem[CC16]{chen-candes}
Yuxin Chen and Emmanuel Cand{\`e}s.
\newblock The projected power method: An efficient algorithm for joint
  alignment from pairwise differences.
\newblock {\em arXiv:1609.05820}, 2016.

\bibitem[CLS12]{cucuringu2012sensor}
Mihai Cucuringu, Yaron Lipman, and Amit Singer.
\newblock Sensor network localization by eigenvector synchronization over the
  euclidean group.
\newblock {\em ACM Transactions on Sensor Networks (TOSN)}, 8(3):19, 2012.

\bibitem[Coh13]{cohen-model-bias}
Jon Cohen.
\newblock Is high-tech view of {HIV} too good to be true?
\newblock {\em Science}, 341(6145):443--444, 2013.

\bibitem[CSSS10]{coifman2010reference}
Ronald~R Coifman, Yoel Shkolnisky, Fred~J Sigworth, and Amit Singer.
\newblock Reference free structure determination through eigenvectors of center
  of mass operators.
\newblock {\em Applied and Computational Harmonic Analysis}, 28(3):296--312,
  2010.

\bibitem[DAM16]{dam}
Yash Deshpande, Emmanuel Abbe, and Andrea Montanari.
\newblock Asymptotic mutual information for the binary stochastic block model.
\newblock In {\em 2016 IEEE International Symposium on Information Theory
  (ISIT)}, pages 185--189. IEEE, 2016.

\bibitem[DM14]{dm-sparse-pca}
Yash Deshpande and Andrea Montanari.
\newblock Information-theoretically optimal sparse pca.
\newblock In {\em 2014 IEEE International Symposium on Information Theory},
  pages 2197--2201. IEEE, 2014.

\bibitem[DM15]{dm-clique}
Yash Deshpande and Andrea Montanari.
\newblock Finding hidden cliques of size $\sqrt{N/e}$ in nearly linear time.
\newblock {\em Foundations of Computational Mathematics}, 15(4):1069--1128,
  2015.

\bibitem[DMM09]{amp-cs}
David~L {Donoho}, Arian {Maleki}, and Andrea {Montanari}.
\newblock Message-passing algorithms for compressed sensing.
\newblock {\em Proceedings of the National Academy of Sciences},
  106(45):18914--18919, 2009.

\bibitem[DMM10]{amp-mot}
David~L {Donoho}, Arian {Maleki}, and Andrea {Montanari}.
\newblock Message passing algorithms for compressed sensing: I. motivation and
  construction.
\newblock {\em IEEE Information Theory Workshop (ITW)}, pages 115--144, 2010.

\bibitem[DMR14]{dmr-cone}
Yash Deshpande, Andrea Montanari, and Emile Richard.
\newblock Cone-constrained principal component analysis.
\newblock In {\em Advances in Neural Information Processing Systems}, pages
  2717--2725, 2014.

\bibitem[ELW10]{macro2}
Daniel Egloff, Markus Leippold, and Liuren Wu.
\newblock The term structure of variance swap rates and optimal variance swap
  investments.
\newblock {\em Journal of Financial and Quantitative Analysis}, 45(5):1279,
  2010.

\bibitem[FP07]{fp}
Delphine F{\'e}ral and Sandrine P{\'e}ch{\'e}.
\newblock The largest eigenvalue of rank one deformation of large {Wigner}
  matrices.
\newblock {\em Communications in Mathematical Physics}, 272(1):185--228, 2007.

\bibitem[GK06]{giridhar2006distributed}
Arvind Giridhar and PR~Kumar.
\newblock Distributed clock synchronization over wireless networks: Algorithms
  and analysis.
\newblock In {\em Proceedings of the 45th IEEE Conference on Decision and
  Control}, pages 4915--4920. IEEE, 2006.

\bibitem[GW95]{gw}
Michel~X Goemans and David~P Williamson.
\newblock Improved approximation algorithms for maximum cut and satisfiability
  problems using semidefinite programming.
\newblock {\em Journal of the ACM (JACM)}, 42(6):1115--1145, 1995.

\bibitem[HWX16]{hwx}
Bruce Hajek, Yihong Wu, and Jiaming Xu.
\newblock Achieving exact cluster recovery threshold via semidefinite
  programming.
\newblock {\em IEEE Transactions on Information Theory}, 62(5):2788--2797,
  2016.

\bibitem[JM13]{jm}
Adel {Javanmard} and Andrea {Montanari}.
\newblock State evolution for general approximate message passing algorithms,
  with applications to spatial coupling.
\newblock {\em Information and Inference}, 2(2):115--144, 2013.

\bibitem[JMR16]{sdp-phase}
Adel {Javanmard}, Andrea {Montanari}, and Federico {Ricci-Tersenghi}.
\newblock Phase transitions in semidefinite relaxations.
\newblock {\em Proceedings of the National Academy of Sciences},
  113(16):E2218--E2223, 2016.

\bibitem[KXZ16]{mi-rank-one}
Florent Krzakala, Jiaming Xu, and Lenka Zdeborov{\'a}.
\newblock Mutual information in rank-one matrix estimation.
\newblock {\em arXiv:1603.08447}, 2016.

\bibitem[LKZ15a]{mmse-low-rank}
Thibault Lesieur, Florent Krzakala, and Lenka Zdeborov.
\newblock Mmse of probabilistic low-rank matrix estimation: Universality with
  respect to the output channel.
\newblock In {\em 2015 53rd Annual Allerton Conference on Communication,
  Control, and Computing (Allerton)}, pages 680--687. IEEE, 2015.

\bibitem[LKZ15b]{phase-sparse}
Thibault Lesieur, Florent Krzakala, and Lenka Zdeborov{\'a}.
\newblock Phase transitions in sparse {PCA}.
\newblock In {\em 2015 IEEE International Symposium on Information Theory
  (ISIT)}, pages 1635--1639. IEEE, 2015.

\bibitem[LS91]{macro1}
Robert~B Litterman and Jose Scheinkman.
\newblock Common factors affecting bond returns.
\newblock {\em The Journal of Fixed Income}, 1(1):54--61, 1991.

\bibitem[MAYB13]{camp}
Arian Maleki, Laura Anitori, Zai Yang, and Richard~G Baraniuk.
\newblock Asymptotic analysis of complex {LASSO} via complex approximate
  message passing ({CAMP}).
\newblock {\em IEEE Transactions on Information Theory}, 59(7):4290--4308,
  2013.

\bibitem[McS01]{mcsherry}
Frank McSherry.
\newblock Spectral partitioning of random graphs.
\newblock In {\em Foundations of Computer Science, 2001. Proceedings. 42nd IEEE
  Symposium on}, pages 529--537. IEEE, 2001.

\bibitem[MM09]{mm-book}
Marc Mezard and Andrea Montanari.
\newblock {\em Information, physics, and computation}.
\newblock Oxford University Press, 2009.

\bibitem[MPV86]{mezard-parisi-virasoro}
Marc M{\'e}zard, Giorgio Parisi, and MA~Virasoro.
\newblock {SK} model: The replica solution without replicas.
\newblock {\em Europhys. Lett}, 1(2):77--82, 1986.

\bibitem[MR16]{mr-nonnegative}
Andrea Montanari and Emile Richard.
\newblock Non-negative principal component analysis: Message passing algorithms
  and sharp asymptotics.
\newblock {\em IEEE Transactions on Information Theory}, 62(3):1458--1484,
  2016.

\bibitem[MS16]{ms-sdp}
Andrea Montanari and Subhabrata Sen.
\newblock Semidefinite programs on sparse random graphs and their application
  to community detection.
\newblock In {\em Proceedings of the 48th Annual ACM SIGACT Symposium on Theory
  of Computing}, pages 814--827. ACM, 2016.

\bibitem[MW15]{ma-wu}
Zongming Ma and Yihong Wu.
\newblock Computational barriers in minimax submatrix detection.
\newblock {\em The Annals of Statistics}, 43(3):1089--1116, 2015.

\bibitem[Pea86]{pearl}
Judea Pearl.
\newblock Fusion, propagation, and structuring in belief networks.
\newblock {\em Artificial intelligence}, 29(3):241--288, 1986.

\bibitem[PWBM16]{pwbm-contiguity}
Amelia Perry, Alexander~S Wein, Afonso~S Bandeira, and Ankur Moitra.
\newblock Optimality and sub-optimality of {PCA} for spiked random matrices and
  synchronization.
\newblock {\em arXiv:1609.05573}, 2016.

\bibitem[RF12]{RF-amp}
Sundeep Rangan and Alyson~K Fletcher.
\newblock Iterative estimation of constrained rank-one matrices in noise.
\newblock In {\em IEEE International Symposium on Information Theory (ISIT)},
  pages 1246--1250. IEEE, 2012.

\bibitem[RV16]{amp-finite}
Cynthia Rush and Ramji Venkataramanan.
\newblock Finite sample analysis of approximate message passing.
\newblock {\em arXiv:1606.01800}, 2016.

\bibitem[RW01]{rubinstein2001reconstruction}
J~Rubinstein and G~Wolansky.
\newblock Reconstruction of optical surfaces from ray data.
\newblock {\em Optical review}, 8(4):281--283, 2001.

\bibitem[Sin11]{singer2011angular}
Amit Singer.
\newblock Angular synchronization by eigenvectors and semidefinite programming.
\newblock {\em Applied and computational harmonic analysis}, 30(1):20--36,
  2011.

\bibitem[SS11]{singer-shkolnisky}
Amit Singer and Yoel Shkolnisky.
\newblock Three-dimensional structure determination from common lines in
  cryo-{EM} by eigenvectors and semidefinite programming.
\newblock {\em SIAM Journal on Imaging Sciences}, 4(2):543--572, 2011.

\bibitem[TAP77]{tap}
David~J Thouless, Philip~W Anderson, and Robert~G Palmer.
\newblock Solution of `{Solvable} model of a spin glass'.
\newblock {\em Philosophical Magazine}, 35(3):593--601, 1977.

\end{thebibliography}

\appendix

\section{Log-likelihood expansion for the Gaussian observation model}
\label{app:gaussian-loglikelihood}

In this section we show how the Gaussian observation model fits into the graphical model formulation by deriving the corresponding coefficient matrices $Y_\rho$. In particular, we show that $Y_\rho = d_\rho \lambda_\rho M_\rho$, a scalar multiple of the observed Gaussian matrix.

We can write $\log \cL_{uv}(g_u,g_v) = \sum_\rho \log \cL_{uv}^\rho(g_u,g_v)$ and consider each representation separately. There are three cases for the three types of representations (see Section~\ref{sec:types}).

For convenience we recall the Gaussian observation model:
$$M_\rho = \frac{{\lambda_\rho}}{n} X_\rho X_\rho^* + \frac{1}{\sqrt{n d_\rho}} W_\rho.$$
Restricting to the $u,v$ submatrix:
$$M_{uv}^\rho = \frac{\lambda_\rho}{n} \rho(g_u g_v^{-1}) + \frac{1}{\sqrt{nd_\rho}} W^\rho_{uv}.$$

\paragraph{Real type.}
Let $\rho$ be of real type. Recall that in this case, each entry of $W_{uv}^\rho$ is $\mathcal{N}(0,1)$. We have
\begin{align*}
\log \cL_{uv}^\rho(g_u,g_v)
&= \frac{-n d_\rho}{2} \left\| M_{uv}^{\rho} - \frac{\lambda_\rho}{n} \rho(g_u g_v^{-1}) \right\|_F^2 \\
&= \left\langle d_\rho \lambda_\rho M_{uv}^{\rho},\, \rho(g_u g_v^{-1}) \right\rangle  + \mathrm{const}.
\end{align*}
Here $\|\cdot\|_F$ denotes the Frobenius norm. The additive constant in the last step depends on $M_{uv}^\rho$ but not on $g_u,g_v$. Thus the log-likelihood coefficients are $Y_{uv}^\rho = d_\rho \lambda_\rho M_{uv}^\rho$ and so $Y_\rho = d_\rho \lambda_\rho M_\rho$.

\paragraph{Complex type.}
Now consider a representation $\rho$ of complex type, along with its conjugate $\bar{\rho}$. Recall that in this case, each entry of $W_{uv}^\rho$ has independent real and imaginary parts drawn from $\mathcal{N}(0,1/2)$. We have
\begin{align*}
\log \cL_{uv}^\rho(g_u,g_v)
&= -n d_\rho \left\| M_{uv}^{\rho} - \frac{\lambda_\rho}{n} \rho(g_u g_v^{-1}) \right\|_F^2 \\
&= \left\langle d_\rho \lambda_\rho M_{uv}^{\rho},\, \rho(g_u g_v^{-1}) \right\rangle + \left\langle d_\rho \lambda_\rho \bar{M_{uv}^{\rho}},\, \bar{\rho(g_u g_v^{-1})} \right\rangle  + \mathrm{const}.
\end{align*}
Therefore we have $Y_\rho = d_\rho \lambda_\rho M_\rho$ and $Y_{\bar{\rho}} = d_\rho \lambda_\rho \bar{M_{\rho}} = d_{\bar\rho} \lambda_{\bar\rho} {M_{\bar\rho}}$.

\paragraph{Quaternionic type.}
Now consider a representation $\rho$ of quaternionic type. Recall that in this case, $W_{uv}^\rho$ is block-quaternion where each $2 \times 2$ block encodes a quaternion value whose 4 entries are drawn independently from $\mathcal{N}(0,1/2)$. Note the following relation between the norm of a quaternion and its corresponding $2 \times 2$ matrix:
$$\|a+bi+cj+dk\|^2 \equiv a^2 + b^2 + c^2 + d^2 = \frac{1}{2} \left\|\begin{array}{cc} a+bi & c+di \\ -c+di & a-bi \end{array}\right\|_F^2.$$
We have
\begin{align*}
\log \cL_{uv}^\rho(g_u,g_v)
&= - n d_\rho \cdot \frac{1}{2} \left\| M_{uv}^{\rho} - \frac{\lambda_\rho}{n} \rho(g_u g_v^{-1}) \right\|_F^2 \\
&= d_\rho \lambda_\rho \,\mathfrak{Re}\left(\left\langle M_{uv}^{\rho}, \rho(g_u g_v^{-1}) \right\rangle\right) + \mathrm{const} \\
&= d_\rho \lambda_\rho \left\langle M_{uv}^{\rho}, \rho(g_u g_v^{-1}) \right\rangle + \mathrm{const}
\end{align*}
where $\mathfrak{Re}$ denotes real part. In the last step we used the fact that $M_{uv}^{\rho}$ and $\rho(g_u g_v^{-1})$ are block-quaternion and so their inner product is real (see Section~\ref{sec:types}). Therefore $Y_\rho = d_\rho \lambda_\rho M_\rho$.

\section{Proof of Lemma~\ref{lemma:A}}
\label{app:A}

To see that (i) and (ii) are equal, recall the interpretation of $\mathcal{F}_\rho$ as a conditional expectation: $\mathcal{F}_\rho(\cdots) = \EE[\rho(g) | \cdots]$ where $\cdots$ stands for $\left\{\gamma_q^t q(g) + \sqrt{\gamma_q^t} z_q\right\}_q$. (This is related to the \emph{Nishimori identities} in statistical physics.)

We have the following symmetry properties of $\mathcal{F}_\rho$.
\begin{lemma}
\label{lemma:sym}
\begin{enumerate}[(1)]
\item For any $\gamma_q^t \in \mathbb{R}$, $z_q \in \mathbb{C}^{d_\rho \times d_\rho}$, and $g,h \in G$, we have
$$\mathcal{F}_\rho\left(\left\{\gamma_q^t q(hg) + \sqrt{\gamma_q^t} z_q\right\}_q\right) = \rho(h) \mathcal{F}_\rho\left(\left\{\gamma_q^t q(g) + \sqrt{\gamma_q^t} q(h^{-1}) z_q\right\}_q\right)$$
and
$$\mathcal{F}_\rho\left(\left\{\gamma_q^t q(gh) + \sqrt{\gamma_q^t} z_q\right\}_q\right) = \mathcal{F}_\rho\left(\left\{\gamma_q^t q(g) + \sqrt{\gamma_q^t} z_q q(h^{-1}) \right\}_q\right)\rho(h).$$
\item Therefore, if we define
$$f_\rho(g) \equiv \EE_{z_q} \mathcal{F}_\rho\left(\left\{\gamma_q^t q(g) + \sqrt{\gamma_q^t} z_q\right\}_q\right)$$
we have $f_\rho(hg) = \rho(h)f_\rho(g)$ and $f_\rho(gh) = f_\rho(g)\rho(h)$.
\end{enumerate}
\end{lemma}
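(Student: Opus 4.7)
Proof plan for Lemma~\ref{lemma:sym}.

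The proof of part (1) is a direct unfolding of the definition of $\mathcal{F}_\rho$ as an integral over $G$, using left- (respectively right-) invariance of Haar measure together with the fact that each $q$ is a unitary representation. Writing out
\[
\mathcal{F}_\rho\!\left(\left\{\gamma_q\, q(hg) + \sqrt{\gamma_q}\, z_q\right\}_q\right) = \frac{\int_k \rho(k)\exp\!\Big(\sum_q \langle \gamma_q\,q(hg) + \sqrt{\gamma_q}\,z_q,\,q(k)\rangle\Big)\,dk}{\int_k \exp\!\Big(\sum_q \langle \gamma_q\,q(hg) + \sqrt{\gamma_q}\,z_q,\,q(k)\rangle\Big)\,dk},
\]
I would substitute $k = h k'$ in both integrals; left-invariance of Haar measure makes this a pure renaming of variables. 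In the numerator, $\rho(hk')=\rho(h)\rho(k')$ pulls an overall factor of $\rho(h)$ out of the integral. For the exponents, the identity $\langle A, q(h)q(k')\rangle = \operatorname{Tr}(A\,q(k')^*q(h)^*) = \langle q(h)^{-1}A,\,q(k')\rangle = \langle q(h^{-1})A,\,q(k')\rangle$ (using unitarity $q(h)^*=q(h^{-1})$) converts each inner product against $q(hk')$ into one against $q(k')$, leaving $\gamma_q q(h^{-1})q(hg) + \sqrt{\gamma_q}\,q(h^{-1})z_q = \gamma_q q(g) + \sqrt{\gamma_q}\,q(h^{-1})z_q$ inside. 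This yields precisely the first identity. The right-multiplication identity is completely analogous, using right-invariance of Haar measure (substitute $k = k'h$) and the dual inner-product identity $\langle A, q(k')q(h)\rangle = \langle A\,q(h^{-1}),\,q(k')\rangle$, with the $\rho(h)$ now emerging on the right of $\rho(k')$.

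Part (2) then follows by taking expectations over $z_q$ in the identities of part (1) and using the distributional invariance of $z_q$ under left (or right) multiplication by the appropriate unitary $q(h^{-1})$. Concretely, for $\rho$ of real type, $q$ is a real orthogonal representation and $z_q$ has i.i.d.\ real Gaussian entries, so $q(h^{-1})z_q \stackrel{d}{=} z_q$; for complex type, $q(h^{-1})$ is unitary and $z_q$ has i.i.d.\ standard complex Gaussian entries, which is unitary-invariant; for quaternionic type, $q(h^{-1})$ is block-quaternion unitary and the Gaussian law on block-quaternion matrices is invariant under block-quaternion unitary multiplication (equivalently, this is the symplectic invariance of the GSE-like ensemble). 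In each case, the same invariance holds for right multiplication. Hence
\[
\mathbb{E}_{z_q}\mathcal{F}_\rho\!\left(\left\{\gamma_q q(g) + \sqrt{\gamma_q}\,q(h^{-1})z_q\right\}_q\right) = \mathbb{E}_{z_q}\mathcal{F}_\rho\!\left(\left\{\gamma_q q(g) + \sqrt{\gamma_q}\,z_q\right\}_q\right) = f_\rho(g),
\]
and substituting into part (1) gives $f_\rho(hg)=\rho(h)f_\rho(g)$ and, by the analogous right-sided computation, $f_\rho(gh)=f_\rho(g)\rho(h)$.

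The only place requiring real care is the distributional invariance of $z_q$ in the three type cases; once that is in hand, everything else is a bookkeeping exercise with Haar invariance and the adjoint identity for the Frobenius inner product. I do not anticipate a genuine obstacle, but I would be explicit about the three types to make clear that the lemma genuinely uses the classification of Section~\ref{sec:types}.
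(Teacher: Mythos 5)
Your proposal is correct and matches the paper's argument exactly: the paper proves part (1) as ``a straightforward computation using the definition of $\mathcal{F}_\rho$'' (i.e.\ the Haar change of variables and the adjoint identity you spell out) and part (2) from the fact that $z_q$ has the same distribution as $q(h^{-1})z_q$ and $z_q\,q(h^{-1})$, which is precisely your type-by-type invariance argument. Your write-up is simply a more explicit version of the same proof.
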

\begin{proof}
Part (1) is a straightforward computation using the definition of $\mathcal{F}_\rho$. Part (2) follows from part (1) because $z_q$ has the same distribution as $q(h^{-1}) z_q$ and $z_q q(h^{-1})$.
\end{proof}

We now return to the proof of Lemma~\ref{lemma:A}. The equality of (i) and (iii) follows from part (2) of Lemma~\ref{lemma:sym}. The equality of (ii) and (iv) follows from part (1) of Lemma~\ref{lemma:sym}. Combining this with the equality of (i) and (ii) from above, we have now shown equality of (i),(ii),(iii),(iv). It remains to show that $A_\rho^t$ is a real multiple of the identity.

Letting $e \in G$ be the identity, we have
$$f_\rho(e)\rho(g) = f_\rho(eg) = f_\rho(ge) = \rho(g)f_\rho(e)$$
and so by Schur's lemma, this means $f_\rho(e)$ is a (possibly complex) multiple of the identity. But $f_\rho(e)$ is just (iii), so we are done. To see that the multiple $a_\rho^t$ is real, note that the trace of (ii) is real.

\end{document}